\documentclass[letterpaper, 10 pt, conference]{ieeeconf}  % Comment this line out if you need a4paper

\IEEEoverridecommandlockouts                              % This command is only needed if 
\usepackage{mathrsfs}
\usepackage{graphicx}
\usepackage{wrapfig,lipsum}
\usepackage{pgfplots}
\usepackage{amsmath,amssymb,amsfonts}  
\usepackage{mathtools}
\usepackage{tikz, tikzscale}
\usepackage{caption}
\usepackage{subcaption}
\usepackage{enumerate}
\usepackage{comment}
\usepackage{booktabs}
\newtheorem{lem}{Lemma}
\newtheorem{proposition}{Proposition}
\newtheorem{defin}{Definition}
\newtheorem{ass}{Assumption}
\newtheorem{thm}{Theorem}

\newenvironment{lemma}{\begin{lem}}{\hfill  \end{lem}}
\newenvironment{definition}{\vskip 3pt\begin{defin}}{\vskip 3pt%% \hfill $\square$ 
\end{defin}}
\newenvironment{theorem}[1][\,\!]{\ignorespaces\begin{thm}[#1]\rm }{\hfill \mbox{\footnotesize$\square$} \end{thm}}
\newcommand{\Gs}{\mathcal{G}_s}

\renewcommand{\a}{\mathrm{a}}
\newcommand{\brho}{b_\rho}

\title{\LARGE \bf
Topology Identification of Dynamical Signed Graphs
}

\author{Pelin Sekercioglu \quad Nana Wang \quad Angela Fontan \quad Dimos V. Dimarogonas% <-this % stops a space
\thanks{*This work was supported in part by the Wallenberg AI, Autonomous Systems and Software Program (WASP) funded by the Knut and Alice Wallenberg (KAW) Foundation, the Horizon Europe EIC project SymAware (101070802), the ERC LEAFHOUND Project, the Swedish Research Council (VR), and Digital Futures. P. Sekercioglu, A. Fontan, and D. V. Dimarogonas are with the Department of Decision and Control Systems, KTH Royal Institute of Technology, SE-100 44 Stockholm, Sweden. email:
        {\tt\small \{pelinse,nanaw,angfon,dimos\}@kth.se}}%
}

\begin{document}

\maketitle
\thispagestyle{empty}
\pagestyle{empty}

%%%%%%%%%%%%%%%%%%%%%%%%%%%%%%%%%%%%%%%%%%%%%%%%%%%%%%%%%%%%%%%%%%%%%%%%%%%%%%%%
\begin{abstract}
We propose an adaptive control protocol for identifying the topology of dynamical networks interconnected over undirected graphs with cooperative and antagonistic interactions. The signed network is modeled using a repelling Laplacian. Topology identification relies on an edge-based formulation of the network and adaptive control protocols through the design of a persistently excited auxiliary network. Our approach guarantees the simultaneous identification and synchronization of the unknown signed network and establishes uniform semiglobal practical asymptotic stability of the estimation errors. Numerical simulations validate our theoretical results.
\end{abstract}

% \textcolor{blue}{\begin{keywords}
% Topology identification, signed networks, synchronization, multi-agent systems, repelling Laplacian.
% \end{keywords}}

%%%%%%%%%%%%%%%%%%%%%%%%%%%%%%%%%%%%%%%%%%%%%%%%%%%%%%%%%%%%%%%%%%%%%%%%%%%%%%%%
\section{INTRODUCTION}
Dynamical networks are used to represent a large number of real networks, \textit{e.g.,} social networks, biological networks, multi-robot systems, sensor systems \cite{newman2003structure}. For instance, the human immune system can be viewed as a complex dynamical network where different types of cells communicate to defend the body against pathogens. Similarly, in multi-robot systems, autonomous agents exchange information and adjust their behaviors based on local interactions to achieve complex tasks. The collective behavior emerging from these local interactions can be modeled using complex dynamical networks. The interaction topology plays a key role in control design, particularly for achieving synchronization, yet it may be partially or entirely unknown in practice.

In the literature on topology identification, existing approaches include prediction error-based \cite{weerts2018prediction}, adaptive control-based \cite{zhu_new_2021}, constrained Lyapunov equation-based \cite{van_waarde_topology_2019}, causality-based \cite{li2023topology} methods, and learning-based methods \cite{11045067}. In this work, we focus on adaptive control-based methods for topology identification, where the control of the original network is designed to track an auxiliary network. Under certain linear independence conditions, these methods guarantee correct topology identification. However, once the network achieves synchronization, such conditions become difficult to verify or maintain. To overcome this, works in \cite{zhu_new_2021, wang2023finite, restrepo2023simultaneous,restrepo2023simultaneousTCNS} introduce external inputs to prevent synchronization and provide the persistency of excitation for the topology estimation.

In all the references above, agents are assumed to cooperate with each other. However, many real-world scenarios involve antagonistic behaviors, such as in robotics applications \cite{IEEE-robotmanip} and social networks where agents compete \cite{Fontan2021Signed}. A common framework for modeling both cooperation and antagonism in dynamical systems is that of \textit{signed graphs} \cite{altafini2012_6329411,Shi2019Dynamics}. In the current literature, there are two different definitions of signed Laplacians for signed graphs, which depend on how the diagonal elements are calculated from the weighted adjacency matrix. They are called \textit{opposing} and \textit{repelling} signed Laplacians in \cite{Shi2019Dynamics}. In this paper, we focus on \textit{repelling} signed Laplacian, whose main feature is that it always has 0 as an eigenvalue, but may or may not be stable since it can have diagonal negative entries. Conditions for the stability of the repelling Laplacian can be found in \cite{9965602}.

We propose an approach for undirected signed network topology identification, based on the \textit{edge-based formulation} and \textit{repelling signed Laplacians}. We build upon the framework proposed in \cite{restrepo2023simultaneous,restrepo2023simultaneousTCNS}, which transforms the synchronization problem into a stabilization problem of the origin in the error coordinates. Unlike these prior works, which assume \textit{all cooperative} interactions, we consider networks with signed topologies, where agents may also exhibit \textit{antagonistic} interactions. Our topology identification algorithm follows an adaptive control approach using the concept of $\delta$-persistency of excitation \cite{panteley2001relaxed}, and we establish uniform semi-global practical asymptotic stability of the origin in terms of the signed topology estimation errors. 
%In contrast to \cite{restrepo2023simultaneous,restrepo2023simultaneousTCNS}, the presence of \textit{antagonistic} interactions implies that the associated Laplacian matrix may have negative eigenvalues and may be unstable. 
Nevertheless, our method enables simultaneous topology identification and synchronization of the network, even when the overall system is initially unstable, provided some global information about the network is available, \textit{e.g.,} the smallest eigenvalue of the repelling signed Laplacian. We then relax this requirement by proposing a graph-theoretical approach for identifying the signed network topologies with edge weights of magnitudes less than or equal to one, allowing each agent to compute its control input in a fully distributed manner, without prior global knowledge.

\textit{Notation:} $\lvert \cdot \rvert$ denotes the absolute value for scalars, the Euclidean norm for vectors, and the spectral norm for matrices. $\mbox{diag}(z)$ denotes a diagonal matrix whose diagonal elements are the entries of the vector $z$. $\mathbb{R}$ is the set of real numbers and $\mathbb{R}_{\geq 0}$ the nonnegative orthant. $A>0$ ($A\ge 0$) indicates that $A$ is a positive definite (positive semidefinite) matrix. A continuous function $\alpha:\mathbb{R}_{\geq 0}\to \mathbb{R}_{\geq 0} \in \mathcal{K}$ if it is strictly increasing and $\alpha(0)=0$; $\alpha \in \mathcal{K}_{\infty}$ if, in addition, $\alpha(s)\to \infty$ as $s \to \infty$. We denote by $\mathbb{B}(\delta) \subset \mathbb{R}^n$ the open ball of radius $\delta$ centered at the origin, \textit{i.e.,} $\mathbb{B}(\delta) := \{ x \in \mathbb{R}^n : \lvert x \rvert < \delta\}$. We use the notation $\mathbb{H}(\delta,\Delta) := \{ x \in \mathbb{R}^n : \delta \leq |x| \leq \Delta\}$. Let $\Gs = (\mathcal{V}, \mathcal{E})$ denote a signed graph, where $\mathcal{V} = \{ \nu_1, \nu_2, \dots, \nu_N \}$ is the set of nodes and $\mathcal{E} \subseteq \mathcal{V} \times \mathcal{V}$ is the set of $M$ edges. The adjacency matrix of \( \Gs \) is \( A := [a_{ij}] \in \mathbb{R}^{N \times N} \), where \( a_{ij} \neq 0 \) if and only if \( (\nu_j, \nu_i) \in \mathcal{E} \). Each edge $(\nu_j, \nu_i)$ has either a positive sign (\( a_{ij} > 0 \)), representing a cooperative relationship, or a negative sign (\( a_{ij} < 0 \)), representing an antagonistic relationship. If all edges have a positive sign, the graph is called an \textit{unsigned} graph. The graph is \textit{undirected} if the information flow between agents is bidirectional, meaning $( \nu_i, \nu_j ) = ( \nu_j, \nu_i ) \in \mathcal{E}$. Otherwise, the graph is \textit{directed}. Self-loops are not considered. The degree matrix of \( \Gs \) is \( D := \mbox{diag}\{d_{ii}\} \in \mathbb{R}^{M \times M} \), where \( d_{ii} = \sum_{j=1}^N a_{ij} \). A spanning tree of a graph is a subgraph that includes all the vertices of the original graph and is a tree, meaning it is connected and has no cycles.

\section{Preliminaries on Persistency of Excitation}
Before delving into details, we recall the following crucial definitions from \cite{panteley2001relaxed} on ($\delta$-)persistency of excitation and lemmata from \cite{panteley2001relaxed,loria2005nested,loria1999new}, which serve as the basis for our stability analysis of the estimation errors. 

Consider the system 
\begin{equation}\label{104}
    \dot x = f(t,x),
\end{equation}
where $f: \mathbb{R}_{\geq 0} \times \mathbb{R}^{n_1+n_2} \to \mathbb{R}^{n_1+n_2}$ such that all solutions $x(t,t_0,x_0)$ with $x(0)=x_0$ are defined for all $t\ge t_0$. Let $x \in \mathbb{R}^n$ be partitioned as $x^\top := [x_1^\top \ x_2^\top]$, where $x_1 \in \mathbb{R}^{n_1}$ and $x_2 \in \mathbb{R}^{n_2}$.

\begin{definition} (\cite{panteley2001relaxed})\label{PE} $\phi(\cdot)$ is bounded and globally Lipschitz. The pair $(\phi,f)$ is called \textit{uniformly persistently exciting} (u-PE) if, for each $r>0$, there exist $\mu, T>0$ such that for all $(t_0, x_0) \in \mathbb{R}_{\ge0} \times \mathbb{B}(r)$, all solutions of \eqref{104} satisfy $\int_t^{t+T}\phi(\tau,x(\tau,t_0,x_0))\phi(\tau,x(\tau,t_0,x_0))^{\top}d\tau \ge \mu I, \forall t\geq t_0$.
\end{definition}
\begin{definition} (\cite{panteley2001relaxed})\label{delta-PE} The pair $(\phi,f)$ is called \textit{uniformly $\delta$-persistently exciting} (u$\delta$-PE) with respect to $x_1$ if, for each $r$ and $\delta>0$, there exist constants $T(r,\delta)>0$ and $\mu(r,\delta)>0$, such that for all $(t_0, x_0) \in \mathbb{R}_{\ge0} \times \mathbb{B}(r)$, all solutions of \eqref{104} satisfy $\min_{s\in [t, t+T]} |x_1(s)| \geq \delta \Rightarrow$
$\int_t^{t+T}\phi(\tau,x(\tau,t_0,x_0))\phi(\tau,x(\tau,t_0,x_0))^{\top}d\tau \ge \mu I, \forall t\geq t_0$.
\end{definition}
With an abuse of notation, we call the function $\phi$, u$\delta$-PE with respect to $x_1$ if the pair $(\phi,f)$ is u$\delta$-PE with respect to $x_1$. Define the set $\mathcal{D} := (\mathbb{R}^{n_1} \setminus \{ 0\}) \times \mathbb{R}^{n_2}$. Then, we have the following.% lemma from \cite{loria2005nested}.
\begin{lemma}(\cite[Lemma 1]{loria2005nested})\label{lemma2} If $x \mapsto \phi(t,x)$ is continuous uniformly in $t$, then $\phi$ is u$\delta$-PE with respect to $x_1$ if and only if for each $x \in \mathcal{D}$, there exist $T>0$ and $\mu >0$ such that $\int_t^{t+T}|\phi(\tau,x)|d\tau \geq \mu$ for all $t \in \mathbb{R}$.
\end{lemma}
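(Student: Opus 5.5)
The plan is to prove the two implications separately, working with the frozen-state characterization on the right-hand side. Throughout I take the implicit standing assumptions of Definition~\ref{delta-PE}: $\phi$ is bounded and globally Lipschitz in $x$, and solutions of \eqref{104} are forward complete with $f$ bounded on bounded sets uniformly in $t$. The latter gives a uniform bound $|\dot x|\le L_f(r')$ on trajectories that stay in $\mathbb{B}(r')$.

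\emph{Sufficiency (frozen condition $\Rightarrow$ u$\delta$-PE).} Fix $r,\delta>0$ and work on the compact set
\[
K:=\{x=(x_1,x_2):\ \delta\le |x_1|,\ |x|\le r'\},
\]
where $r'$ bounds trajectories starting in $\mathbb{B}(r)$ over windows of fixed length.
\begin{enumerate}[(i)]
\item For each $\bar x\in K$, the hypothesis gives $T_{\bar x},\mu_{\bar x}$ with $\int_t^{t+T_{\bar x}}|\phi(\tau,\bar x)|\,d\tau\ge\mu_{\bar x}$ for all $t$.
\item Continuity of $\phi$ in $x$, uniform in $t$, makes this inequality hold with $\mu_{\bar x}/2$ on a neighbourhood of $\bar x$.
\item Compactness of $K$ yields finitely many such neighbourhoods, hence uniform constants $T^*,\mu^*$ valid for every frozen $x\in K$.
\item Along a trajectory with $\min_{s\in[t,t+T]}|x_1(s)|\ge\delta$, cut $[t,t+T]$ into short subintervals on which $x(\tau)$ moves by less than $\varepsilon$; this is possible because $|\dot x|\le L_f$. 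On each subinterval, replace $\phi(\tau,x(\tau))$ by $\phi(\tau,x(\tau_k))$ at a cost of at most $L_\phi\varepsilon$ times its length.
\item Conclude the integral lower bound along solutions.
\end{enumerate}

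The main obstacle is step (iv). Definition~\ref{delta-PE} asks for the matrix bound $\int\phi\phi^\top\ge\mu I$, whereas the frozen condition only controls $\int|\phi|$. In the scalar or vector-output case used by \cite{loria2005nested}, $\phi$ enters as the regressor norm, and the bound $\int|\phi|\ge\mu$ turns into $\int|\phi|^2\ge\mu^2/T$ by Cauchy--Schwarz. I would adopt that reading, which is consistent with how Lemma~\ref{lemma2} is stated, and follow the compactness argument of \cite[Lemma 1]{loria2005nested}. The remaining difficulty is that a frozen point of a short subinterval may not itself be excited over that subinterval. To handle this, I would choose the window length $T:=T^*$ first, and then choose the partition mesh. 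Along the window the trajectory stays within $L_fT^*$ of $x(t)$, so I would shrink the neighbourhoods in (ii) to radius larger than $L_fT^*$. This requires picking $r'$ and the covering radius consistently, which is delicate because $L_f$ depends on $r'$. If that circularity cannot be closed, the fallback is the direct contradiction argument. Suppose uniformity fails. Then there are sequences of initial conditions and times whose excitation integrals go to $0$. By Arzel\`a--Ascoli applied to the equi-Lipschitz trajectory pieces, together with uniform continuity of $\phi$, a subsequence converges to a limit state with $|x_1|\ge\delta$ and vanishing excitation over every window, contradicting the frozen hypothesis.

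\emph{Necessity (u$\delta$-PE $\Rightarrow$ frozen condition).} Given $\bar x\in\mathcal{D}$, I would apply the definition to the auxiliary system $\dot x=0$, whose solutions are constant, with $r>|\bar x|$ and $\delta=|\bar x_1|$. The hypothesis $\min|x_1|\ge\delta$ then holds trivially. The definition gives $\int_t^{t+T}\phi\phi^\top\ge\mu I$, hence $\int_t^{t+T}|\phi|^2\ge\mu$, and boundedness of $\phi$ gives $\int_t^{t+T}|\phi|\ge\mu/\sup|\phi|$. This uses the reading, as in \cite{loria2005nested}, that u$\delta$-PE of $\phi$ is a property of $\phi$ valid for the relevant class of $f$. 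I would state this interpretation explicitly at the start of the proof.
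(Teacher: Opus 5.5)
\textbf{Genuine gap: steps (iv)--(v) aim at an implication that is false.} The paper gives no proof of this lemma. It imports \cite[Lemma 1]{loria2005nested}, where u$\delta$-PE is a property of $\phi$ at \emph{frozen} arguments: for each $x\in\mathcal{D}$ there are $\varepsilon,T,\mu>0$ with $\int_t^{t+T}|\phi(\tau,z)|\,d\tau\ge\mu$ for all $t$ and all $|z-x|\le\varepsilon$. Equivalently, by a finite cover, the constants are uniform on compact subsets of $\mathcal{D}$.

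For that notion, your steps (i)--(iii) already are the whole proof. Necessity is taking $z=x$. Sufficiency is uniform continuity, giving $|\phi(\tau,z)-\phi(\tau,x)|\le\mu/(2T)$ near $x$, plus compactness if uniform constants are wanted. This is also the notion the paper later relies on in $V_3$ of Proposition~\ref{prop4}, where $\tilde x$ is frozen inside the integral.

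The problem is what you add in (iv)--(v). Passing from the frozen condition to the trajectory-based Definition~\ref{delta-PE} is not just delicate; it is false. So no choice of covering radius and no compactness argument can close it. Take $x=x_1=(y_1,y_2)^\top\in\mathbb{R}^2$, $f(t,x)=(y_2,-y_1)^\top$ and $\phi(t,x)=\min\{1,\max\{0,\,y_1\sin t+y_2\cos t\}\}$.
\begin{itemize}
\item This scalar $\phi$ is bounded and $1$-Lipschitz in $x$ uniformly in $t$.
\item For every frozen $x\neq 0$, $\int_t^{t+2\pi}\phi(\tau,x)\,d\tau=\int_0^{2\pi}\min\{1,\max\{0,|x|\sin s\}\}\,ds>0$ for all $t$.
\item Along solutions, $y_1\sin t+y_2\cos t$ and $|x|$ are constant. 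So the solution with $t_0=0$, $x_0=(0,-1)^\top$ keeps $|x_1(t)|\equiv 1$ while $\phi(t,x(t))\equiv 0$. The trajectory co-rotates with the excitation and stays in its dead phase, so the pair $(\phi,f)$ is not u$\delta$-PE.
\end{itemize}
This is exactly where your Arzel\`a--Ascoli fallback breaks. The limit you extract is a moving trajectory with vanishing excitation, not a frozen state, and the frozen hypothesis says nothing about moving trajectories.

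Necessity has the mirror-image problem. By the paper's convention, ``$\phi$ is u$\delta$-PE'' means the pair $(\phi,f)$ with the given $f$, so you cannot swap in $\dot x=0$. For a fixed $f$ the implication fails. Take $n_1=1$, $n_2=0$, $f(t,x)=x$ and $\phi(t,x)=\min\{1,\max\{0,|x|-1\}\}$. Any solution with $|x|\ge\delta$ on a window reaches $|x|\ge 2$ within time $\ln(2/\delta)$. Hence the pair is u$\delta$-PE with $T=1+\max\{0,\ln(2/\delta)\}$ and $\mu=1$, while $\phi(\cdot,1/2)\equiv 0$.

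Your remark that $\int\phi\phi^\top\ge\mu I$ does not follow from $\int|\phi|\ge\mu$ for vector-valued $\phi$ is correct. Together with the two examples, it means the statement as juxtaposed with Definition~\ref{delta-PE} cannot be proved for general $f$. The only reading under which both directions hold is the frozen one, which is what Definition~\ref{delta-PE} collapses to when $f\equiv 0$ and $\phi$ is scalar. Under that reading, you should stop after step (iii).
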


The following lemma establishes that the output of a strictly proper, minimum phase and strictly stable filter, driven by a u$\delta$-PE input signal, is also u$\delta$-PE. A reminiscent version of the Lemma is originally presented in \cite{panteley2001relaxed}.
\begin{lemma} \label{filprop}
    Let $\phi : \mathbb{R}_{\ge 0} \times \mathbb{R}^n \to \mathbb{R}^m$ and consider
\begin{align}\label{eq:filtration_system}
\begin{bmatrix}
\dot{x}\\
\dot{\omega}
\end{bmatrix}
=
\begin{bmatrix}
f(t, x, \omega) \\
f_1(t, \omega) + f_2(t, x)\omega + \phi(t, x)
\end{bmatrix}
\end{align}
with $f_1 : \mathbb{R}_{\ge 0} \times \mathbb{R}^n \to \mathbb{R}^m$ Lipschitz in $\omega$ uniformly in $t$ and measurable in $t$, and satisfying $|f_1(\cdot)| \le |\omega|$ for all $t$;
$f_2 : \mathbb{R}_{\ge 0} \times \mathbb{R}^n \to \mathbb{R}^{m \times m}$ is locally Lipschitz in $x$ uniformly in $t$ and measurable in $t$.
Assume that $\phi(t, x)$ is u$\delta$-PE with respect to $x$. Assume also that $\phi$ is locally Lipschitz and there exists a non-decreasing function $\alpha : \mathbb{R}_{\ge 0} \to \mathbb{R}_{\ge 0}$ such that, for almost all $(t, x) \in \mathbb{R}_{\ge 0} \times \mathbb{R}^n$:
\begin{equation}
\max \left\{
|\phi(\cdot)|,\, |f(\cdot)|,\, |f_2(\cdot)|,\,
\left| \frac{\partial \phi(\cdot)}{\partial t} \right|,\,
\left| \frac{\partial \phi(\cdot)}{\partial x} \right|
\right\}
\le \alpha(|x|).
\label{eq:filtration_alpha}
\end{equation}
Assume further that all solutions $t \mapsto x_\phi$ of \eqref{eq:filtration_system}, with $x_\phi^{\top} = [x^{\top} \; \omega^{\top}]$, are defined in $[t_0, \infty)$ and satisfy
$|x_\phi(t, t_0, x_{\phi_0})| \le r, \forall t \ge t_0,$ then $\omega$ is u$\delta$-PE with respect to $x$.
\end{lemma}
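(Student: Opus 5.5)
We argue by contradiction, following the filtration argument of \cite{panteley2001relaxed}. Fix $r$ and $\delta>0$, and let $T,\mu$ be the constants given by the u$\delta$-PE property of $\phi$ for this $(r,\delta)$. Take any solution with $|x_\phi(t)|\le r$ and $\min_{s\in[t,t+T']}|x(s)|\ge\delta$ on a window of length $T'$ to be chosen. We will show that $\int_t^{t+T'}\omega\omega^\top d\tau\ge\mu' I$, with $T',\mu'$ depending only on $(r,\delta)$.

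\textbf{Step 1 (reduce to a scalar direction).} Fix a unit vector $v\in\mathbb{R}^m$ and set $y:=v^\top\omega$. Suppose $\int_t^{t+T'}y^2\,d\tau<\varepsilon$ for some small $\varepsilon$. We aim to show that this forces $\int_t^{t+T}(v^\top\phi)^2 d\tau$ to be small on a subwindow, which contradicts u$\delta$-PE of $\phi$.

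\textbf{Step 2 (integration by parts).} From the $\omega$-equation,
\begin{equation*}
v^\top\phi(\tau,x)=\dot y-v^\top f_1(\tau,\omega)-v^\top f_2(\tau,x)\omega .
\end{equation*}
Multiply by $v^\top\phi(\tau,x(\tau))$ and integrate over $[t,t+T]$. For the $\dot y$ term, integrate by parts to get
\begin{equation*}
\Big[y\,v^\top\phi\Big]_t^{t+T}-\int_t^{t+T} y\,\frac{d}{d\tau}\big(v^\top\phi(\tau,x(\tau))\big)\,d\tau .
\end{equation*}
Here $\tfrac{d}{d\tau}\phi=\partial_t\phi+\partial_x\phi\,f(\tau,x,\omega)$, which is bounded by $\alpha(r)+\alpha(r)^2$ using \eqref{eq:filtration_alpha}.

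\textbf{Step 3 (bound the remaining terms).} We have $|f_1|\le|\omega|$ and $|f_2|\le\alpha(r)$. Since $|x_\phi|\le r$, the derivative $\dot y$ is bounded by some $c(r)$, so $y$ is uniformly Lipschitz. Hence small $L^2$ norm of $y$ on a window implies small sup norm there: $\sup|y|\le c_1(r)\,\varepsilon^{1/3}$.

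\textbf{Main obstacle.} The terms $v^\top f_1$ and $v^\top f_2\omega$ involve the full vector $\omega$, not only its projection $y=v^\top\omega$. So smallness in a single direction does not suffice. I would therefore apply Step 1 with the minimizing direction $v$, i.e., work with the smallest eigenvalue of the Gram matrix $\int\omega\omega^\top$. Suppose the Gram matrix is not $\ge\mu' I$. Then I would try to argue that $\omega$ itself is small on a sufficiently long subwindow. The idea is to use the filter structure: $\dot\omega$ is driven by $\phi$ plus terms linear in $\omega$, with $|f_1|\le|\omega|$, and a Gronwall/comparison estimate would control these. If that fails, I would instead invoke the original result of \cite{panteley2001relaxed}, which treats exactly this structure under the stated bounds.

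\textbf{Conclusion.} Once the terms in Step 2 are $O(\varepsilon^{1/3})$ uniformly, we get $\int_t^{t+T}(v^\top\phi)^2\le c_2(r)\,T\,\varepsilon^{1/3}<\mu$ for $\varepsilon$ small enough. This contradicts u$\delta$-PE of $\phi$ with respect to $x$. Taking $T'=T$ and $\mu'=\varepsilon$, we conclude that $\omega$ is u$\delta$-PE with respect to $x$.
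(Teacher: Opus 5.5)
The paper gives no proof of this lemma, only the pointer to \cite{panteley2001relaxed}. Your sketch therefore has to stand on its own, and it does not. The step you flag as the ``Main obstacle'' is exactly where it breaks, and neither remedy you propose works. Your integration by parts and your $L^2$-to-$\sup$ estimate are fine, but the resulting identity is
\begin{align*}
\int_t^{t+T}(v^\top\phi)^2\,d\tau ={}& \big[y\,v^\top\phi\big]_t^{t+T}-\int_t^{t+T}y\,\tfrac{d}{d\tau}(v^\top\phi)\,d\tau\\
&-\int_t^{t+T}(v^\top\phi)\,v^\top\big(f_1(\tau,\omega)+f_2(\tau,x)\omega\big)\,d\tau .
\end{align*}
Only the first two terms are $O(\sup|y|)$; the last one is controlled by $\sup|\omega|$. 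A small smallest eigenvalue of $\int\omega\omega^\top$ makes only $v^\top\omega$ small, not $\omega$. So the claim in your Conclusion that all terms are $O(\varepsilon^{1/3})$ is unjustified, and deferring to the reference is not a proof.

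In fact, if u$\delta$-PE is read in the Gram-matrix sense of Definition~\ref{delta-PE} and $m\ge 2$, no argument can close this gap under the stated hypotheses. Take $n=1$, $m=2$, $f\equiv 0$ (so $x\equiv x_0$), $f_1\equiv 0$, and
\[
\phi(t,x)=x\begin{bmatrix}\sin t\\ \cos t\end{bmatrix},\qquad f_2(t,x)=\begin{bmatrix}-\sin t&0\\ -\cos t&-1\end{bmatrix}.
\]
Then $\int_t^{t+2\pi}\phi\phi^\top d\tau=\pi x_0^2 I$, and the bound \eqref{eq:filtration_alpha} holds with $\alpha(s)=s+2$. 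All solutions are uniformly bounded, since $\omega_1-x_0=(\omega_1(t_0)-x_0)e^{\cos t-\cos t_0}$ and $\dot\omega_2=-\omega_2-\cos t\,(\omega_1-x_0)$. Yet $\omega\equiv(x_0,0)$ is a solution, and $\int_t^{t+T}\omega\omega^\top d\tau=T\,\mathrm{diag}(x_0^2,0)$ is singular even though $|x|\ge\delta$ and $|\omega|\ge\delta$. The $(2,1)$ entry of $f_2$ cancels the second component of $\phi$ exactly, which is precisely the coupling term you could not bound. This also shows your fallback plan of proving ``$\omega$ itself is small on a subwindow'' cannot succeed.

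Two repairs are available.
\begin{itemize}
\item If u$\delta$-PE is understood in the norm sense of Lemma~\ref{lemma2} (via $|\phi|$; for bounded signals this is equivalent to a lower bound on $\int|\phi|^2$), drop the direction $v$ and multiply the $\omega$-equation by the full vector $\phi^\top$. Then $\int_t^{t+T}|\phi|^2d\tau=[\phi^\top\omega]_t^{t+T}-\int_t^{t+T}\omega^\top\tfrac{d}{d\tau}\phi\,d\tau-\int_t^{t+T}\phi^\top(f_1+f_2\omega)\,d\tau$, and every term is at most $C(r)(1+T)\sup_{[t,t+T]}|\omega|$. Your Lipschitz argument turns $\int_t^{t+T}|\omega|^2<\varepsilon$ into $\sup|\omega|\le c(r)\varepsilon^{1/3}$, so your Steps 2--3 go through essentially verbatim.
\item If you want the Gram-matrix conclusion of Definition~\ref{delta-PE}, you must add structure that makes the filter act as a scalar on $\omega$, e.g.\ $f_1(t,\omega)=a(t,\omega)\omega$ and $f_2(t,x)=g(t,x)I$ with scalar $a,g$. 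Then $v^\top(f_1+f_2\omega)=(a+g)\,y$ with $|a|\le 1$ and $|g|\le\alpha(r)$. The last term above becomes $O(\sup|y|)$, and your one-direction argument closes. This covers the filter $\dot{\hat x}=-\hat x+\phi$ actually used in Proposition~\ref{prop1}.
\end{itemize}
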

Finally, the following lemma from \cite{loria1999new} will be used in the subsequent analysis.
%our main results on the stability analysis (in Propositions \ref{prop1} and \ref{prop2}).
\begin{lemma} \label{kyplemma}
For the system 
\begin{align}\label{lemma2eq}
\begin{bmatrix}
		\dot{{x}}_1 \\ \dot{x}_2 
	\end{bmatrix} = 
	\begin{bmatrix}
		A&
			B \psi(t,x(t))^\top \\
			-\psi(t,x(t))C^\top & 0
	\end{bmatrix} 	\begin{bmatrix}
		x_1 \\ x_2
	\end{bmatrix}\end{align}
suppose that there exist continuous non-decreasing functions 
$\alpha_i : \mathbb{R}_{\ge 0} \to \mathbb{R}_{\ge 0}$, $i = 1, 2$, such that $|\psi(t, x)| \le \alpha_1(|x|),$
$$\max \left\{
\left| \frac{\partial \psi(t, x)}{\partial t} \right|,
\left| \frac{\partial \psi(t, x)}{\partial \xi} \right|
\right\} \le \alpha_2(|x|),$$
and assume that $\psi(t, x(t))$ is u$\delta$-PE with respect to $x$.
If the triple $(A, B, C)$ satisfies the Kalman--Yakubovich--Popov lemma, i.e., there exist $P = P^{\top} > 0$ and $Q = Q^{\top} > 0$ such that
$A^{\top}P + PA = -Q$ and $PB = C$, then the origin of \eqref{lemma2eq} is uniformly globally
asymptotically stable.
\end{lemma}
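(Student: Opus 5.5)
The natural route is a strict Lyapunov construction, following the cascaded/persistency-of-excitation arguments of \cite{loria1999new,panteley2001relaxed}. The KYP condition gives the basic certificate. With
\[
V_1(x)=x_1^\top P x_1 + x_2^\top x_2,
\]
the derivative along \eqref{lemma2eq} is
\[
\dot V_1=-x_1^\top Q x_1+2x_1^\top PB\psi^\top x_2-2x_2^\top\psi C^\top x_1 .
\]
Since $PB=C$, the cross terms cancel and $\dot V_1=-x_1^\top Q x_1\le 0$. Hence the origin is uniformly globally stable: every solution starting in $\mathbb{B}(r)$ stays in a ball of radius $\bar r=\sqrt{\lambda_{\max}(P,1)/\lambda_{\min}(P,1)}\,r$, and $x_1\in L_2$ uniformly. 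The same bound shows that $\psi(t,x(t))$ and its derivative are bounded by $\alpha_1(\bar r)$ and $\alpha_2(\bar r)(1+\text{bound on }\dot x)$ along trajectories.

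Uniform global attractivity is then obtained through Matrosov's theorem, or equivalently the $\delta$-PE characterization, which extracts strict decrease from $x_2$. I would use the auxiliary function
\[
V_2=-x_1^\top P B\psi^\top x_2 ,
\]
or the form $-\int_t^\infty e^{t-\tau}|\psi(\tau,x)^\top x_2|^2d\tau$ used in \cite{loria1999new}. Its derivative contains the term $-x_2^\top\psi B^\top P B\psi^\top x_2$, plus cross terms bounded by $c(\bar r)|x_1||x_2|$ and by $\alpha_2$ times $|x_1||x_2|$. On compact sets, whenever $|x_1|$ is small, $\dot V_2\le -|\psi^\top x_2|^2+c|x_1|$. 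The $\delta$-PE assumption (Definition \ref{delta-PE}) is stated with respect to the whole state $x$. Suppose $|x(s)|\ge\delta$ on a window $[t,t+T]$ while $|x_1|$ is small there. Then $|x_2|\gtrsim\delta/2$, and $\int_t^{t+T}\psi\psi^\top\ge\mu I$ together with the near-constancy of $x_2$ gives $\int_t^{t+T}|\psi^\top x_2|^2\ge\mu\delta^2/8$. To keep $x_2$ nearly constant, note that $|\dot x_2|\le\alpha_1(\bar r)|C||x_1|$ is small on the window.

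The argument now splits into two cases on each window of length $T$. Either $\int|x_1|^2$ is bounded below, which forces a decrease of $V_1$, or it is small, in which case the PE integral forces a decrease of $V_1+\varepsilon V_2$. A standard contradiction argument (Matrosov-type, as in \cite{loria2005nested}) then gives a uniform time $T^*(r,\delta)$ after which $|x|<\delta$. Combined with uniform global stability, this yields UGAS.

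The main obstacle is the interplay between $\delta$-PE and this case split. The PE hypothesis only applies while $|x|\ge\delta$, and $x_2$ must be controlled to stay nearly constant over a window whose length $T$ depends on $(\bar r,\delta)$. I would handle this by choosing the threshold on $\int|x_1|^2$ as a function of $T(\bar r,\delta)$, $\mu(\bar r,\delta)$ and $\alpha_1(\bar r)$, so that both cases yield a decrease of $V_1$ uniformly bounded below. Since $V_1$ is bounded by a function of $r$, the number of such windows is bounded by a constant depending only on $(r,\delta)$, which gives the uniform convergence time.
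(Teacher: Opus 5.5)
The paper does not prove this lemma; it quotes it from \cite{loria1999new}. Your outline is the standard route for such results: uniform global stability from $V_1$, then a cross-term auxiliary function, then a window-counting (Matrosov-type) argument. The problem is the key inequality $\dot V_2\le-|\psi^\top x_2|^2+c|x_1|$. Differentiating $V_2=-x_1^\top PB\psi^\top x_2$ actually produces
$-x_2^\top\psi B^\top PB\psi^\top x_2=-|P^{1/2}B\psi^\top x_2|^2$.
Bounding this by $-\kappa|\psi^\top x_2|^2$ with $\kappa>0$ requires $B^\top PB>0$, i.e.\ $B$ of full column rank. The bare existence of $P,Q$ with $A^\top P+PA=-Q$ and $PB=C$ does not give this.

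Without full column rank, excitation of $\psi$ says nothing about $B\psi^\top x_2$, and the statement as written is in fact false:
\begin{itemize}
\item Take $A=-I$, $P=I$, $Q=2I$, $B=C=0$ and $\psi\equiv I$, which is smooth, bounded and trivially u$\delta$-PE. Then $\dot x_2=0$, so every $(0,x_2)$ is an equilibrium.
\item A non-degenerate instance is $n_1=1$, $A=-1$, $P=1$, $B=C=(1\;\;1)$, $\psi\equiv I_2$. Here $\frac{d}{dt}(x_{2,1}-x_{2,2})=0$, so this difference is conserved.
\end{itemize}
So no argument along these lines can close without an extra hypothesis, and your proof silently uses one.

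There are two repairs. You can add the assumption that $B$ has full column rank, which holds automatically in the SISO setting with $B\neq0$. Then $-x_2^\top\psi B^\top PB\psi^\top x_2\le-\lambda_{\min}(B^\top PB)|\psi^\top x_2|^2$. Alternatively, you can move the excitation hypothesis onto $B\psi^\top$: require $\int_t^{t+T}\psi B^\top B\psi^\top d\tau\ge\mu I$ whenever $|x|\ge\delta$ on the window. Your near-constancy argument for $x_2$ then goes through unchanged.

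This matters for the paper itself. Proposition~\ref{prop1} applies the lemma with $B=\bar E$, whose rank is $N-1<\bar M$ for $N\ge3$. Proposition~\ref{prop4} imposes the excitation on $\bar E\hat Z_\theta$, which is exactly the condition your argument needs.

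Once this is fixed, make two further points explicit.
\begin{itemize}
\item Integrating $\dot V_2$ over a window needs $|x_1|$ small \emph{pointwise} at the window ends, since $|V_2|\le c|x_1||x_2|$. You get this from the $L_2$-smallness of $x_1$ together with the bound on $|\dot x_1|$ on the invariant ball.
\item In the case where $\int|x_1|^2$ is small, the right conclusion (for a small enough threshold) is a contradiction, not a decrease of $V_1+\varepsilon V_2$. It then follows that every window on which $|x|\ge\delta$ forces a fixed drop in $V_1$, which is what the counting argument needs.
\end{itemize}
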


\section{Model and Problem Formulation}\label{sec:m-p-formulation}
Consider a group of $N$ agents interconnected over an unknown and undirected topology described by an undirected signed graph $\Gs$. Each agent is modeled by 
\begin{equation}\label{CN}
	\dot{x}_i = f_i(x_i) - \sum_{j=1}^N a_{ij} (x_i - x_j) + u_i
\end{equation}
where $i,j \in \{ 1,2, \dots, N\}$, $x_i \in \mathbb{R}$ is the state of agent $i$, $f_i : \mathbb{R} \to \mathbb{R}$ is a smooth function, $a_{ij}$ is the \textit{unknown} adjacency weight of the interconnection between agents $i$ and $j$. More precisely, $a_{ij}=0$ if agents $i$ and $j$ are not interconnected, and $a_{ij}\neq0$ otherwise, where $a_{ij}>0$ if agents $i$ and $j$ are cooperative, and $a_{ij}<0$ if they are antagonistic.

Defining $x := [x_1\ x_2\ \dots \ x_N]^\top$, $F(x) := [f(x_1)\ f(x_2)\ \dots \ f(x_N)]^\top$ and $u:= [u_1\ u_2\ \dots \ u_N]^\top$, the system in \eqref{CN} may be expressed in compact form as
\begin{align}\label{vec_L}
	\dot{x} = F(x) - L x + u,
\end{align}
where $L \in \mathbb{R}^{N \times N}$ denotes the \textit{repelling} signed Laplacian matrix (hereafter simply denoted signed Laplacian, or by the symbol $L$, to simplify the notation) of the signed network $\Gs$, and its elements are defined by ${\ell_{ij}} = \sum\limits_{k =i}^N {{ a_{ik} }}$ if $i = j$ and ${\ell_{ij}} = - {a_{ij}}$ if $i \ne j$ \cite{Shi2019Dynamics}. For an undirected signed graph, the signed Laplacian $L$ is symmetric and all its eigenvalues are real. Moreover, it has at least one zero eigenvalue, and $\mathbf{1}\in \mathbb{R}^N$ is always an eigenvector of $L$ associated with the zero eigenvalue. However, in contrast to the \textit{conventional} Laplacian matrix (associated with a unsigned graphs, i.e., a network with all cooperative interactions), $-L$ is not Metzler, may not be diagonally dominant\footnote{A matrix is Metzler if all its off-diagonal elements are nonnegative; it is diagonally dominant if the absolute value of each diagonal element is greater than or equal to the sum of the absolute values of all the other (off-diagonal) elements in that same row.}, and may be indefinite, meaning it can have both positive and negative eigenvalues \cite{9965602}. This in turn implies that the (uncontrolled) system \eqref{vec_L} may have unstable dynamics. See examples in \cite{9965602} for unstable multi-agent systems described by repelling signed Laplacian dynamics.

The signed Laplacian matrix $L$ of the undirected signed network $\Gs$ can be expressed by
\begin{align}\label{453}
	L = E W_s E^\top,
\end{align}
where $W_s := \mbox{diag}\{ w_{s_k} \} \in \mathbb{R}^{M \times M}$, $M$ is the number of edges, and $w_{s_k}$ corresponds to the \textit{unknown} weight of the signed edge $\varepsilon_k = (i,j)$ interconnecting agents $i$ and $j$, which can be either positive or negative. In \eqref{453}, $E$ is the incidence matrix describing the interaction topology of the network. It is defined by first assigning an arbitrary orientation to each edge, after which its entries are given by ${[E]_{ik}} :=+1,$ if $\nu_i$ is the initial node of the edge $\varepsilon_k$; ${[E]_{ik}} := -1,$ if $\nu_i$ is the terminal node of the edge $\varepsilon_k$; ${[E]_{ik}} := 0$, otherwise.
% \begin{equation}\label{def:E}
% 	{[E]_{ik}} :=  \left\{ \begin{matrix*}[l]
% 		{+1,}&{\text{if}\ v_i\ \text{is the initial node of the edge}\ \varepsilon_k ;}\\
% 		{-1,}&{\text{if } v_i\ \text{is the terminal node of the edge}\ \varepsilon_k ;}\\
% 		{0,}&{\text{otherwise}}.
% 	\end{matrix*}\right.
% \end{equation}

To address the problem of topology identification, following the framework in \cite{restrepo2023simultaneous,restrepo2023simultaneousTCNS}, we first define a complete signed graph $\bar{\Gs}(\mathcal{V},\bar{\mathcal{E}})$, where $\bar{\mathcal{E}} \subseteq \mathcal{E}$ is the set of $\bar M:= \frac{1}{2}N(N-1)$ signed edges, and we denote by $\bar E \in \mathbb{R}^{N \times \bar M}$ its incidence matrix. Note that $\bar E$ is known, and can be constructed, since it corresponds to a complete signed graph with a known number of nodes, equivalent to the number of agents in the system \eqref{CN}.
We then introduce the signed weight matrix $\bar{W}_s := \mbox{diag}\{ \bar{w}_{s_k} \} \in \mathbb{R}^{\bar M \times \bar M}$. Here, $\bar{w}_{s_k} = {w}_{s_k} \ne 0$ if the interaction exists in the true signed graph $\Gs$, \textit{i.e.,} $\bar{\varepsilon}_k \in \mathcal{E}$ and $\bar{w}_{s_k} = 0$ otherwise, \textit{i.e.,} $\bar{\varepsilon}_k \in \bar{\mathcal{E}} \setminus \mathcal{E}$. 
This formulation allows the signed Laplacian matrix of the true signed graph $\Gs$ to be represented as $L = \bar E \bar W_s \bar E^\top$.
Therefore, \eqref{CN} can be expressed, in vector form, as 
\begin{align}\label{vec_E_multi}
	\dot{x} = F(x) - \bar E \bar W_s \bar E^\top x + u.
\end{align}
Consequently, the topology identification problem can be expressed as the estimation problem of the entries of the diagonal signed weight matrix $\bar{W}_s$, \textit{i.e.}, the weights $\bar{w}_{s_k}$.

\section{Main Results}
The control objective is to design the external input $u$ in order to simultaneously identify the unknown weights $\bar{w}_{s_k}$ on the signed graph $\Gs$ and synchronize the dynamical systems \eqref{CN}, even if, as discussed in Section~\ref{sec:m-p-formulation}, the system may have unstable dynamics. As a first step to achieve synchronization and topology identification, we design an auxiliary variable $\hat{x}$ and ensure that the network system \eqref{vec_L} converges to $\hat{x}$. By designing $\hat{x}$ to be persistently exciting (PE) (see Definition \ref{PE}), we establish uniform global asymptotic stability of the origin (Proposition \ref{prop1}). Next, we redesign the control gain using a game-theoretical approach (Proposition \ref{prop2}) independent of global knowledge of the network. Finally, we show uniform semiglobal practical asymptotic stability of the origin (Proposition \ref{prop4}).

We first make sure that the network system \eqref{vec_E_multi} converges to an auxiliary variable $\hat{x}$. Let the external input be set to
\begin{align}\label{u}
	u = -F(x) -c_1(x - \hat x(t)) + \dot{\hat{x}}(t) + \hat L \hat x(t),
\end{align}
where $\hat{x}(t)$ will be defined later. Here, $c_1$ is a positive constant and $\hat L$ is the estimated signed Laplacian containing the estimated weights of the signed graph defined as $\hat L = \bar E \hat W_s \bar E^\top,$ where $\hat{W}_s := \mbox{diag}\{ \hat{w}_{s_k} \} \in \mathbb{R}^{\bar M \times \bar M}$ are the estimated weights and $\bar M= \frac{1}{2}N(N-1)$. The update law for the weights is given as 
\begin{align}\label{updatelaw_two}
	\dot{\hat w}_s &= - \mbox{diag} \{ \bar E^\top \hat x \} \bar E^\top \tilde x,
\end{align}
where $\tilde x := x - \hat{x}(t)$ indicates the state synchronization error.

We define the weight estimation errors as $\tilde w_s := \bar w_s - \hat w _s$, let $\xi^\top = [\tilde x^\top \ \tilde{w}_s ^\top]$. We set $\hat{x}$ to be the state of an auxiliary dynamical system, whose update law is given as follows.
\begin{align}\label{eq2}
	\dot{\hat{x}}(t) = -\hat{x}(t) + \phi(t, \xi),
\end{align}
where $\phi(t, \xi): \mathbb{R}_{\geq 0} \times \mathbb{R}^{2 \bar M} \to \mathbb{R}^{\bar M}$, $(t,\xi)\mapsto \phi(t,\xi)$ is a function to be defined later. For this function, we pose the following assumption.
\begin{ass}\label{assumption1}
Let $\rho:\mathbb{R}_{\geq 0}\to \mathbb{R}_{\geq 0}$ be a continuous non-decreasing function. For all $\xi \in \mathbb{R}^{\bar M}$ and almost all $t \in \mathbb{R}_{\geq 0}$
\begin{align}\label{ass1}
	\max \left\{ \lvert \phi(\cdot) \rvert, \lvert \frac{\partial \phi(\cdot)}{\partial t} \rvert , \lvert \frac{\partial \phi(\cdot)}{\partial \xi} \rvert \right\} \leq \rho(\lvert \xi \rvert).
\end{align}
\end{ass}

From \eqref{vec_E_multi}, \eqref{u} and \eqref{eq2}, the dynamics of the error coordinates is given by 
\begin{align*} 
	\dot{\tilde{x}} &= \dot{x} - \dot{\hat{x}}= -( c_1 I + L) \tilde x(t) - \tilde L \hat x (t)\\
	%&= -( c_1 I + L) \tilde x(t) - \bar E \hat Z (t) \tilde w_s(t) \\
	\dot{\tilde{w}}_s &= \dot{\bar{w}}_s - \dot{\hat{{w}}}_s= \mbox{diag}\{\bar E^\top \hat x(t) \} \bar E^\top \tilde x(t) 
\end{align*}
where $\dot{\bar{w}}_s=0$ since ${\bar{w}}_s$ is constant and $\tilde L := \bar E \tilde W_s(t) \bar E^\top$. Equivalently, the closed-loop system of the error dynamics is given by, in vector form,
\begin{align}\label{eq1}
	\begin{bmatrix}
		\dot{\tilde{x}} \\ \dot{\tilde{w}}_s 
	\end{bmatrix} = 
	\begin{bmatrix}
		-c_1 I - L&
			-\bar E \hat{Z}(t) \\
			\hat{Z}(t) \bar E^\top & \mathbf{0}_{\bar M \times \bar M}
	\end{bmatrix} 	\begin{bmatrix}
		\tilde{x} \\ \tilde{w}_s
	\end{bmatrix},
\end{align}
where $\hat{Z}(t) = \mbox{diag}\{ \hat{z}(t) \}$ with $\hat{z}(t) = \bar E^\top \hat x(t)$.

In the literature on topology identification (see, e.g., \cite{zhu_new_2021}), systems of the form \eqref{eq1} have been proven to be globally exponentially stable if $\hat x(t)$ is designed to be PE (see Definition \ref{PE}), a property that directly follows from Assumption~\ref{assumption1} and the definition of $\hat x(t)$ in \eqref{eq2}.  
The corresponding result is presented in Proposition~\ref{prop1}.
 However, even if this allows for topology identification, the synchronization of the system cannot be guaranteed (see \cite{zhu_new_2021}). Then, in order to simultaneously identify the topology and synchronize the system \eqref{vec_E_multi}, we design $\phi(t, \xi)$ to be u$\delta$-PE as per Definition \ref{delta-PE}.

\begin{proposition}\label{prop1} Let $\phi: \mathbb{R}_{\geq 0} \times \mathbb{R}^{2 \bar M} \to \mathbb{R}^{\bar M},$ $(t, \xi) \mapsto \phi(t, \xi)$ satisfying \eqref{ass1} and $\lambda_{\min}(L)$ be the smallest eigenvalue of $L$. Then, for any $c_1 > -\lambda_{\min}(L)$, if $\phi$ is u$\delta$-PE, the origin of the closed-loop system \eqref{eq1} with $\hat x(t)$ given by the update law \eqref{eq2} is uniformly globally asymptotically stable and the topology of network \eqref{vec_E_multi} is identified by the estimate $\hat w$, that is, $\bar w = \lim_{t \to \infty}\hat w (t)$ with the update law \eqref{updatelaw_two}. Moreover, the system achieves synchronization.
\end{proposition}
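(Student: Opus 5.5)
The plan is to cast \eqref{eq1} in the form \eqref{lemma2eq} and invoke Lemma~\ref{kyplemma}, with $x_1=\tilde x$, $x_2=\tilde w_s$, $A=-c_1 I - L$, $B=-\bar E$, $C=-\bar E$ and $\psi(t,\xi)^\top = \hat Z(t)$. Since $\hat Z$ is diagonal, $\psi^\top=\psi$, so the lower-left block of \eqref{eq1} is $\hat Z\bar E^\top = -\psi C^\top$ with $C=-\bar E$, as required.

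\textbf{KYP condition.} Take $P=I$. Then $PB=C$ holds trivially, and
\[
A^\top P+PA = -2(c_1 I + L) =: -Q .
\]
Because $L$ is symmetric with real eigenvalues, $Q\ge 2(c_1+\lambda_{\min}(L))I>0$ exactly when $c_1>-\lambda_{\min}(L)$. This is the only place the hypothesis on $c_1$ is used, and it is what absorbs the possible instability of the repelling Laplacian.

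\textbf{Regularity of $\psi$.} I need growth bounds on $\psi=\mathrm{diag}\{\bar E^\top \hat x\}$ and its partial derivatives. The difficulty is that $\hat x$ is the state of the auxiliary filter \eqref{eq2}, not an explicit function of $(t,\xi)$. I would handle this by working with the augmented state $(\xi,\hat x)$, as in Lemma~\ref{filprop}. For $\hat x$ I use the variation-of-constants formula and Assumption~\ref{assumption1}:
\[
|\hat x(t)|\le |\hat x(t_0)|+\rho(|\xi|_\infty),
\qquad
|\dot{\hat x}|\le |\hat x|+\rho(|\xi|).
\]
These give bounds $\alpha_1,\alpha_2$ of the required non-decreasing type. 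A preliminary boundedness step supplies the needed uniform bound on $\xi$. With $V=\tfrac12|\tilde x|^2+\tfrac12|\tilde w_s|^2$, the cross terms $-\tilde x^\top\bar E\hat Z\tilde w_s$ and $\tilde w_s^\top \hat Z\bar E^\top\tilde x$ cancel, so
\[
\dot V=-\tilde x^\top(c_1I+L)\tilde x\le 0 .
\]
Hence $|\xi(t)|\le|\xi(t_0)|$, which bounds $\hat x$ and makes all solutions bounded by some $r$.

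\textbf{u$\delta$-PE of $\psi$.} This is the main obstacle. Since $\phi$ is u$\delta$-PE and \eqref{eq2} is a strictly proper stable filter of the form \eqref{eq:filtration_system} (with $f_1=-\hat x$, $f_2=0$, and the bounds supplied by Assumption~\ref{assumption1} and the boundedness step), Lemma~\ref{filprop} gives that $\hat x$ is u$\delta$-PE with respect to $\xi$. I then have to pass this property to the edge signals $\hat z=\bar E^\top\hat x$, i.e.\ to $\mathrm{diag}\{\hat z\}$, and this is the delicate part. $\bar E^\top$ annihilates $\mathbf 1$, so excitation of $\hat x$ along the consensus direction is lost. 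Moreover, uniform positivity of $\int \hat Z^2$ requires every edge difference $\hat x_i-\hat x_j$ to be excited. I would therefore use Lemma~\ref{lemma2} and interpret the u$\delta$-PE hypothesis on $\phi$ as ensuring, for each $\xi\neq0$, excitation of every pairwise difference. That is, $\int_t^{t+T}|\hat z_k|\,d\tau\ge\mu$ for each $k$, which yields $\int \hat Z\hat Z^\top\ge\mu' I$ after using boundedness and Lipschitz continuity.

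\textbf{Conclusions.} Lemma~\ref{kyplemma} then gives uniform global asymptotic stability of the origin of \eqref{eq1}. Thus $\tilde w_s\to0$, i.e.\ $\hat w_s\to\bar w_s$, which identifies the topology. Also $\tilde x\to0$, so $x$ tracks $\hat x$. Finally, as $\xi\to0$, u$\delta$-PE lets $\phi$ vanish (e.g.\ by choosing $\phi(t,0)$ to be consensual), so $\hat x$ converges to a common trajectory and the network synchronizes.
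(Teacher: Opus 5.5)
Your proposal is correct and follows the same route as the paper: $V_1$ gives uniform global stability and boundedness of $\xi$, boundedness of $\hat x$ together with Lemma~\ref{filprop} gives u$\delta$-PE of $\hat x$, and Lemma~\ref{kyplemma} with $P=I$ gives uniform global asymptotic stability (your choice $B=C=-\bar E$ is the sign-consistent version of the paper's substitution). On your two hedges: the per-edge excitation needs no reinterpretation of the hypothesis, since testing $\int_t^{t+T}\hat x\hat x^\top d\tau\ge\mu I$ with $e_i-e_j$ directly gives $\int_t^{t+T}\hat z_k^2\,d\tau\ge 2\mu$; your extra condition on $\phi(t,0)$ for synchronization is genuinely needed, because \eqref{abc} only bounds $\hat x$ and does not give the limit $\hat x\to 0$ that the paper asserts.
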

\begin{proof}
Consider the following Lyapunov function
\begin{equation}\label{V1}
    V_1(\tilde x, \tilde w_s) := \frac{1}{2} \tilde{x}^\top\tilde{x} + \frac{1}{2} \tilde{w}_s^\top \tilde{w}_s.
\end{equation}

Its derivative along the trajectories of \eqref{eq1} yields
\begin{align}
\label{dotV1}
	\dot{V}_1 (\tilde x, \tilde w) = &- \tilde{x}^\top (c_1 I + L) \tilde x - \tilde{x}^\top \bar E \hat{Z}_s(t) \tilde{w}_s \nonumber \\
    &+ \tilde{w}_s^\top \hat{Z}_s(t) E_s^\top \tilde x = - \tilde{x}^\top (c_1 I + L ) \tilde x.
\end{align}

Notice that here, if $-L$ is marginally stable\footnote{$-L$ is called marginally stable (of corank $1$) if $0$ is a unique eigenvalue of $-L$ and all its other eigenvalues are negative (see \cite{9965602} for details).}, then $c_1 I + L > 0$ for any $c_1 > 0$. On the other hand, if $-L$ has negative eigenvalues, from the assumption in Proposition~\ref{prop1} that $c_1 > - \lambda_{\min}(L)$, we can conclude that the system \eqref{eq1} is uniformly globally stable, and its solutions $\xi(t,t_0,\xi_0)$ are uniformly globally bounded. 

Then, for the auxiliary system \eqref{eq2}, we consider the Lyapunov function $\hat V (\hat x) = \frac{1}{2} \lvert \hat x \rvert^2$, and its derivative satisfies $\dot{\hat V} (\hat x) \leq - \lvert \hat x \rvert^2 + \lvert \hat x \rvert \lvert \phi(t, \xi) \rvert \leq  - \lvert \hat x \rvert^2 + \lvert \hat x \rvert \lvert \rho(\lvert \xi \rvert) \rvert$, which we obtained from \eqref{ass1}. Let $\epsilon>0$ be such that $c_1':= 1-\frac{\epsilon}{2} >0$. Using Young's inequality on the second term, we have $\dot{\hat V} (\hat x)  \leq - c_1' \lvert \hat x \rvert^2 + \frac{1}{2\epsilon}\lvert \rho(\lvert \xi \rvert) \rvert^2$. Since the solutions $\xi(t,t_0,\xi_0)$ are uniformly globally bounded, there exists $c_2>0$ such that $\lvert \rho(\lvert \xi(t,t_0,\xi_0) \rvert) \rvert^2 \leq c_2$ for all $t$. Then,
\begin{align}
	\dot{\hat V} (\hat x) & \leq - c_1' \lvert \hat x \rvert^2 + c_2. \label{abc}
\end{align}
We conclude that the solutions $\hat x (t,t_0,\hat x_0)$ are uniformly globally bounded. Therefore, since all the assumptions in Lemma \ref{filprop} are satisfied, $\hat x (t)$ is $u \delta$-PE with respect to $\xi$. The rest of the proof follows from Lemma \ref{kyplemma} by replacing $x_1$, $x_2$, $A$, $B$, and $\psi(t,x(t))$ with $\tilde x$, $\tilde{w}_s$, $-c_1 I - L$, $\bar E$ and $\hat Z(t)$, respectively. Thus, we conclude that the origin of \eqref{eq1} is uniformly globally asymptotically stable. Therefore, we have $\lim_{t \to \infty} x(t) = \hat x(t)$ and $\lim_{t \to \infty}\hat w_s = \bar w_s$. Moreover, from \eqref{abc}, we have $\lim_{t \to \infty} \hat x(t) = 0$ and $\lim_{t \to \infty} x(t) = 0$.
%since $\left( (-c_1 I - L), \bar E, \bar E \right)$ satisfies the Kalman-Yakubovich-Popov (KYP) lemma, \textit{i.e.,} for $c_1 > -\lambda_{\min}$, there exist $P = P^\top$ and $Q=Q^\top$ such that $(-c_1 I - L)^\top P + P (-c_1 I - L) = -Q$ and $P \bar E = \bar E$ with $P=I$, from Lemma~\ref{kyplemma} in the Appendix, the origin of \eqref{eq1} is uniformly globally asymptotically stable. Therefore, we have that $\lim_{t \to \infty} x(t) = \hat x(t)$ and $\lim_{t \to \infty}\hat w_s = \bar w_s$. Moreover, from \eqref{abc}, we have that $\lim_{t \to \infty} \hat x(t) = 0$ and $\lim_{t \to \infty} x(t) = 0$.
\end{proof}

Proposition~\ref{prop1} relies on the knowledge of a global parameter of the network, which is the smallest eigenvalue of the signed Laplacian matrix. Indeed, in the case of a signed graph, the signed Laplacian $L$ can be indefinite, meaning that its smallest eigenvalue may be negative, i.e., $\lambda_{\min}(L)<0$. Intuitively, this situation arises when the number of negative edges in the signed graph is large (see \cite{fontan2023pseudoinverses,9965602} for details and examples). 
Therefore, to stabilize the system, the parameter $c_1$ in Proposition~\ref{prop1} needs to be chosen big enough to counteract the effect of this negative smallest eigenvalue, which requires some prior knowledge on the network (in this case, of $\lambda_{\min}(L)$).
To address this limitation, we present a result, independent of global knowledge, for signed networks having interconnection weights between $1$ and $-1$ (Proposition~\ref{prop2}). For that, we pose the following assumption.
\begin{ass}\label{assumption2}
The signed graph satisfies $|\bar w_{s_k}| \leq 1$, where $\bar w_{s_k}$ are \textit{unknown} (but normalized) weights.   
\end{ass}
\begin{proposition}\label{prop2} 
%{\color{red} Assume that the signed topology satisfies $|\bar w_{s_k}| \leq 1$, where $\bar w_{s_k}$ are \textit{unknown} (but normalized) weights.}
Let $\phi: \mathbb{R}_{\geq 0} \times \mathbb{R}^{2 \bar M} \to \mathbb{R}^{\bar M},$ $(t, \xi) \mapsto \phi(t, \xi)$ satisfying \eqref{ass1}. Then, under Assumption \ref{assumption2}, if $\phi$ is u$\delta$-PE, where $c_1 > N$, the origin of the closed-loop system \eqref{eq1} with $\hat x(t)$ given by the update law \eqref{eq2} is uniformly globally asymptotically stable and the topology of network \eqref{vec_E_multi} is identified by the estimate $\hat w$, that is, $\bar w = \lim_{t \to \infty}\hat w (t)$ with the update law \eqref{updatelaw_two}. Moreover, the system achieves synchronization.
\end{proposition}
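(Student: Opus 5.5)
The plan is to reduce Proposition~\ref{prop2} to Proposition~\ref{prop1}. The closed-loop system \eqref{eq1}, the Lyapunov function \eqref{V1}, the filter \eqref{eq2}, and the appeal to Lemmas~\ref{filprop} and~\ref{kyplemma} are all unchanged. The only place where global knowledge entered the proof of Proposition~\ref{prop1} is the requirement $c_1 I + L > 0$, equivalently $c_1 > -\lambda_{\min}(L)$, needed to make $\dot V_1 = -\tilde x^\top (c_1 I + L)\tilde x$ negative semidefinite. So it suffices to show that, under Assumption~\ref{assumption2}, $-\lambda_{\min}(L) < N$ always holds, so that $c_1 > N$ implies $c_1 > -\lambda_{\min}(L)$.

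To bound $\lambda_{\min}(L)$ from below, I would write the quadratic form through the edge representation,
\begin{equation*}
x^\top L x = \sum_{k} \bar w_{s_k} (\bar E^\top x)_k^2 \ge -\sum_{k} |\bar w_{s_k}|\,(\bar E^\top x)_k^2 \ge -x^\top \bar E \bar E^\top x ,
\end{equation*}
using $|\bar w_{s_k}| \le 1$. Here $\bar E \bar E^\top$ is the (unsigned) Laplacian of the complete graph on $N$ nodes, which equals $N I - \mathbf{1}\mathbf{1}^\top$. Its eigenvalues are $0$ (simple) and $N$ (multiplicity $N-1$), so $x^\top \bar E\bar E^\top x \le N|x|^2$. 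Hence $\lambda_{\min}(L) \ge -N$, and therefore $c_1 I + L \ge (c_1 - N) I > 0$ whenever $c_1 > N$.

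An alternative route is a Gershgorin argument: each row of $L$ has $|\ell_{ii}| + \sum_{j\ne i}|\ell_{ij}| \le 2(N-1)$, which only gives $\lambda_{\min}(L) \ge -2(N-1)$. That bound is too weak for the stated threshold $c_1 > N$ when $N > 2$, so I would use the edge/complete-graph bound above. This step is the only genuinely new ingredient, and I expect it to be the main point requiring care.

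With $c_1 I + L > 0$ established, I would repeat the proof of Proposition~\ref{prop1} verbatim:
\begin{itemize}
\item uniform global boundedness of $\xi$ follows from $\dot V_1 \le 0$;
\item boundedness of $\hat x$ follows via \eqref{abc}, so $\hat x$ is u$\delta$-PE by Lemma~\ref{filprop};
\item Lemma~\ref{kyplemma} then applies with $A = -c_1 I - L$ and $B = C = \bar E$, where $P = I$ and $Q = 2(c_1 I + L) > 0$ satisfy the KYP conditions. This yields uniform global asymptotic stability of the origin of \eqref{eq1}, identification $\hat w_s \to \bar w_s$, and synchronization exactly as before.
\end{itemize}
Finally, I would note that the choice $c_1 > N$ requires only the number of agents, which is already assumed known since $\bar E$ is constructed from it, so no knowledge of $\lambda_{\min}(L)$ is needed.
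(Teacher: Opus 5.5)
Your proposal is correct and follows the paper's proof. Both reduce to Proposition~\ref{prop1} by showing $\lambda_{\min}(L)\ge -N$ under Assumption~\ref{assumption2}, and your bound $x^\top L x \ge -x^\top \bar E\bar E^\top x = -N|x|^2 + (\mathbf{1}^\top x)^2$ is the matrix form of the paper's estimate $\tfrac12\sum_{ij}a_{ij}(x_i-x_j)^2 \ge -\tfrac12\sum_{ij}(x_i-x_j)^2$, followed by Rayleigh's theorem.

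One wording slip: you say it suffices to show $-\lambda_{\min}(L)<N$, but that strict inequality fails for the complete graph with all weights equal to $-1$, where equality holds. What you actually prove, $-\lambda_{\min}(L)\le N$, is what is needed, because $c_1>N$ is strict.
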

\begin{proof}
Consider the Lyapunov function in \eqref{V1}. Its derivative along the trajectories of \eqref{eq1} yields
\begin{align}\label{Vdot}
	\dot V_1 (\tilde x, \tilde w) &= - \tilde{x}^\top (c_1 I + L ) \tilde x \leq -(c_1 + \lambda_{\min}(L)) \lvert \tilde x \rvert^2.
\end{align}

In what follows, we provide a lower bound for $\lambda_{\min}(L)$, namely, $\lambda_{\min}(L)\ge -N$. Using this result, for \eqref{Vdot} to be negative, it is sufficient to choose $c_1 > N$, with the remainder of the proof following from Proposition \ref{prop1}.

To prove that $\lambda_{\min}(L)\ge -N$ holds, we proceed as follows.
Note that, since the signed graph is undirected ($A=A^\top$), we have 
$x^\top L x = \frac{1}{2} \sum_{ij}a_{ij} (x_i -x_j)^2$. Then, under the assumption of normalized weights $|a_{ij}|\le 1$ for all $i,j =1,\dots, N$, and of no self-loops $a_{ii}=0$, we obtain $x^\top L x = \frac{1}{2} \sum_{ij}a_{ij} (x_i -x_j)^2 \ge -\frac{1}{2} \sum_{ij} (x_i -x_j)^2 = - N \sum_{i} x_i^2 + \sum_{i,j} x_i x_j = - N \sum_{i} x_i^2 + (\sum_i x_i)^2 \ge - N \sum_{i} x_i^2 = - N |x|^2.$
% \begin{align*}
%     x^\top &L x 
%     %=\frac{1}{2}\sum_i x_i^2\big(\sum_j a_{ij}) + \frac{1}{2}\sum_j x_j^2\big(\sum_i a_{ji})- \sum_{ij}x_ix_ja_{ij}\\
%     %&= \frac{1}{2} \Big(\sum_{ij}a_{ij} (x_i^2 + x_j^2 -2x_ix_j) \Big)
%     = \frac{1}{2} \sum_{ij}a_{ij} (x_i -x_j)^2
%     \\&\ge -\frac{1}{2} \sum_{ij} (x_i -x_j)^2
%     \\&= - N \sum_{i} x_i^2 + \sum_{i,j} x_i x_j
%     \\&= - N \sum_{i} x_i^2 + (\sum_i x_i)^2
%     \\& \ge - N \sum_{i} x_i^2 = - N |x|^2
%     % = \frac{1}{2} \sum_{ij}a_{ij} (x_i -x_j)^2
%     % \\&\ge -\frac{1}{2} \sum_{ij} (x_i -x_j)^2
%     % \\&= - (N-1) \sum_{i} x_i^2 + \sum_{i,j\ne i} x_i x_j
%     % \\&= - (N-1) \sum_{i} x_i^2 + \big( (\sum_i x_i)^2 - \sum_i x_i^2\big)
%     % \\&= - N \sum_{i} x_i^2 + (\sum_i x_i)^2
%     % \\& \ge - N \sum_{i} x_i^2 = - N |x|^2
% \end{align*}
It follows from Rayleigh's Theorem \cite[Theorem 4.2.2]{Horn} that $\lambda_{\min}(L) = \min_{|x|=1} x^\top L x \ge -N$. Equality corresponds to the \textit{worst case scenario}, where the signed graph is complete and has all negative edges.
\end{proof}

The proofs of Propositions \ref{prop1} and \ref{prop2} rely on the fact that $\phi(t, \xi)$ in \eqref{eq2} is u$\delta$-PE with respect to $\xi$. However, this would mean that the design of $\phi$ would depend on $\tilde w_s$, and consequently on $\bar w_s$, which are the unknown weights to be estimated. Therefore, in the following, we study the stability properties of \eqref{eq1} and \eqref{eq2}, when $\phi$ is u$\delta$-PE only with respect to $\tilde x$. To that end, we recall the following definition and theorem from \cite{chaillet2008uniform}, which will be used in the proof of Proposition \ref{prop4}.
Consider
\begin{align}\label{a}
    \dot{x}=f(t,x,\theta),
\end{align}
where $x \in \mathbb{R}^n, t \in \mathbb{R}_{\geq 0},$ and $\theta \in \mathbb{R}_{n_{\theta}}$ is a constant parameter, and $f(t,x,\theta)$ is piecewise continuous in $t$ and locally Lipschitz in $x$ for all $\theta$.
\begin{definition}(\cite{chaillet2008uniform})
Let $\Theta \in \mathbb{R}_{n_\theta}$ be a set of parameters.  
The system \eqref{a} is said to be \textit{uniformly semiglobally practically asymptotically stable} on $\Theta$ if, given any $\Delta > \delta > 0$, there exists $\theta^*(\delta, \Delta) \in \Theta$ such that $\mathbb{B}(\delta)$ is uniformly asymptotically stable on $\mathbb{B}(\Delta)$ for the system $\dot{x} = f(t, x, \theta^*).$
\end{definition}
The following theorem provides a Lyapunov-based condition for uniform semiglobal practical asymptotic stability of the system \eqref{a}.
\begin{theorem} [\cite{chaillet2008uniform}] \label{th1}
Suppose that given any $\Delta > \delta > 0$, there exist a parameter $\theta^*(\delta, \Delta) \in \Theta$, a continuously differentiable function 
$V_{\delta, \Delta} : \mathbb{R}_{\ge 0} \times \mathbb{R}^n \to \mathbb{R}_{\ge 0}$, class $\mathcal{K}_\infty$ functions 
$\underline{\alpha}_{\delta, \Delta}$, $\overline{\alpha}_{\delta, \Delta}$, and $\alpha_{\delta, \Delta}$ such that for all 
$x \in \mathbb{H}(\delta, \Delta)$ and all $t \in \mathbb{R}_{\ge 0}$, it holds that $\underline{\alpha}_{\delta, \Delta} (|x|_\delta) \le V_{\delta, \Delta}(t, x) \le \overline{\alpha}_{\delta, \Delta} (|x|)$
and
$\frac{\partial V_{\delta, \Delta}}{\partial t}(t, x) 
+ \frac{\partial V_{\delta, \Delta}}{\partial x}(t, x) f(t, x, \theta^*) 
\le - \alpha_{\delta, \Delta} (|x|).$
Furthermore, assume that, given any $\Delta^* > \delta^* > 0$, there exists $\Delta > \delta > 0$ such that $\underline{\alpha}^{-1}_{\delta, \Delta} \circ \overline{\alpha}_{\delta, \Delta}(\delta) \le \delta^*,$ $\overline{\alpha}^{-1}_{\delta, \Delta} \circ \underline{\alpha}_{\delta, \Delta}(\Delta) \ge \Delta^*.$ Then, the system \eqref{a} is uniformly semiglobally practically asymptotically stable on the parameter set $\Theta$.
\end{theorem}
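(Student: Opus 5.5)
The statement is Theorem~\ref{th1}, quoted from \cite{chaillet2008uniform}. I would prove it directly from the definition of uniform semiglobal practical asymptotic stability by a sublevel-set argument. Here $|x|_\delta$ denotes the distance from $x$ to $\mathbb{B}(\delta)$.

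\textbf{Setup.} Fix target radii $\Delta^*>\delta^*>0$. By the last hypothesis, pick $\Delta>\delta>0$ with $\underline{\alpha}^{-1}_{\delta,\Delta}\circ\overline{\alpha}_{\delta,\Delta}(\delta)\le\delta^*$ and $\overline{\alpha}^{-1}_{\delta,\Delta}\circ\underline{\alpha}_{\delta,\Delta}(\Delta)\ge\Delta^*$. Take the corresponding parameter $\theta^*$ and function $V:=V_{\delta,\Delta}$. The goal is to show that $\mathbb{B}(\delta^*)$ is uniformly asymptotically stable on $\mathbb{B}(\Delta^*)$ for $\dot x=f(t,x,\theta^*)$. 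That means three things, all with constants depending only on $(\delta^*,\Delta^*)$ and not on $t_0$:
\begin{itemize}
\item solutions starting in $\mathbb{B}(\Delta^*)$ stay uniformly bounded;
\item they reach a neighbourhood of the target set in a uniform time;
\item they remain there afterwards, with a class-$\mathcal{K}$ overshoot estimate for stability of the set.
\end{itemize}

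\textbf{Step 1 (boundedness).} Take $|x_0|<\Delta^*$ and any $t_0$. As long as the solution lies in $\mathbb{H}(\delta,\Delta)$, $V$ is nonincreasing, since its derivative is at most $-\alpha(|x|)<0$.
\begin{itemize}
\item If the solution starts in the annulus, then $V(t_0,x_0)\le\overline{\alpha}(\Delta^*)\le\underline{\alpha}(\Delta)$.
\item Hence on any interval spent in the annulus we get $\underline{\alpha}(|x(t)|_\delta)\le\underline{\alpha}(\Delta)$.
\item So the solution cannot exit through the outer sphere $|x|=\Delta$ with $V$ still below the level needed to cross it.
\end{itemize}
Entries into $\mathbb{H}(\delta,\Delta)$ from inside happen through $|x|=\delta$, where $V\le\overline{\alpha}(\delta)<\overline{\alpha}(\Delta^*)$, so the same bound applies after every entry. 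This also gives forward completeness on the relevant time horizon.

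\textbf{Step 2 (uniform reaching time).} While the solution is in $\mathbb{H}(\delta,\Delta)$ we have $\dot V\le-\alpha(\delta)$ and $0\le V\le\overline{\alpha}(\Delta)$. Therefore the time spent there before first reaching $\mathbb{B}(\delta)$ is at most $T:=\overline{\alpha}(\Delta)/\alpha(\delta)$, which is independent of $t_0$ and $x_0$.

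\textbf{Step 3 (ultimate bound and stability of the ball).} After the first hitting time the solution may leave $\mathbb{B}(\delta)$ again. Each time it does, it re-enters the annulus at $|x|=\delta$ with $V\le\overline{\alpha}(\delta)$, and $V$ decreases while it stays there. Hence
\[
|x(t)|_\delta\le\underline{\alpha}^{-1}\circ\overline{\alpha}(\delta)\le\delta^* .
\]
For uniform stability of the ball, take initial conditions with $|x_0|_\delta$ small. The same chain
\[
\underline{\alpha}(|x|_\delta)\le V\le\overline{\alpha}(|x|)
\]
gives the required class-$\mathcal{K}$ estimate of the overshoot in terms of the initial size.

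\textbf{Main obstacle.} I expect the delicate point to be bookkeeping between the distance $|x|_\delta$ in the lower bound and the norm $|x|$ in the upper bound and in the annulus. For example, Step 1 a priori only yields $|x|\le\Delta+\delta$, and Step 3 only yields $|x|\le\delta+\delta^*$. I would handle this in three ways:
\begin{itemize}
\item interpret the practical-stability conclusion with respect to $|x|_\delta$, which is the convention in \cite{chaillet2008uniform};
\item where needed, apply the hypothesis with slightly shrunk or enlarged radii, using the freedom that $\delta,\Delta$ may be chosen for arbitrary $\delta^*,\Delta^*$;
\item argue by contradiction at the first exit time from the relevant sublevel set, using continuity of solutions.
\end{itemize}
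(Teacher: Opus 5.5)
The paper gives no proof of this theorem; it is quoted from \cite{chaillet2008uniform}. Your sublevel-set strategy is the standard one. However, the point you set aside as ``bookkeeping'' is a genuine gap. With the lower bound written as $\underline{\alpha}_{\delta,\Delta}(|x|_\delta)$ (read, as you do, as the distance to $\mathbb{B}(\delta)$), it cannot be closed.

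In Step~1 the hypotheses only give $V\ge\underline{\alpha}_{\delta,\Delta}(\Delta-\delta)$ on the sphere $|x|=\Delta$. Your initial level is only known to satisfy $V\le\overline{\alpha}_{\delta,\Delta}(\Delta^*)\le\underline{\alpha}_{\delta,\Delta}(\Delta)$, which can exceed that. So nothing prevents the solution from reaching $|x|=\Delta$, and beyond that sphere the hypotheses say nothing, so it may escape. Your bound $|x|\le\Delta+\delta$ is not actually available, because it is derived only on time intervals spent in $\mathbb{H}(\delta,\Delta)$. Likewise, Step~3 only gives $|x|\le\delta+\delta^*$, and nothing ties $\delta$ to $\delta^*$.

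None of your remedies helps:
\begin{itemize}
\item Measuring the conclusion with $|x|_\delta$ does not stop escape, and uniform asymptotic stability of $\mathbb{B}(\delta^*)$ requires $|x(t)|_{\delta^*}\to 0$ anyway.
\item The comparison functions attached to different pairs $(\delta,\Delta)$ are unrelated, so changing radii yields no inequality between $\underline{\alpha}_{\delta,\Delta}(\Delta-\delta)$ and $\overline{\alpha}_{\delta,\Delta}(\Delta^*)$.
\item A first-exit contradiction needs exactly that missing inequality.
\end{itemize}

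In fact the statement as transcribed is false. Let $f(t,x,\theta)=x$ for every $\theta$, which is clearly not USPAS. Given $\Delta>\delta>0$, take $V_{\delta,\Delta}(t,x)=2\Delta-|x|$ near $\mathbb{H}(\delta,\Delta)$, extended to a nonnegative $C^1$ function. Take also $\alpha(s)=s$, $\underline{\alpha}(s)=\frac{\Delta}{\Delta-\delta}s$ and $\overline{\alpha}(s)=\frac{2\Delta}{\delta}s$. On the annulus, $\underline{\alpha}(|x|_\delta)\le\Delta\le V\le 2\Delta\le\overline{\alpha}(|x|)$ and $\dot V=-|x|$. Moreover, $\underline{\alpha}^{-1}\circ\overline{\alpha}(\delta)=2(\Delta-\delta)$ and $\overline{\alpha}^{-1}\circ\underline{\alpha}(\Delta)=\frac{\Delta\delta}{2(\Delta-\delta)}$. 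Hence $\delta=1$, $\Delta=1+\eta$ with $\eta\le\min\{\delta^*/2,\,1/(2\Delta^*)\}$ meets the last hypothesis for every $\Delta^*>\delta^*>0$. Trajectories exit through $|x|=\Delta$ at level $V=\Delta=\underline{\alpha}(\Delta-\delta)<\underline{\alpha}(\Delta)$, precisely the scenario you feared.

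The missing ingredient is the classical lower bound $\underline{\alpha}_{\delta,\Delta}(|x|)\le V_{\delta,\Delta}(t,x)$. This is also what the paper actually provides, as bounds in $|\xi|$, when it applies the theorem in Proposition~\ref{prop4}. With it, your Steps~1--3 go through:
\begin{itemize}
\item $\underline{\alpha}\le\overline{\alpha}$ on $[\delta,\Delta]$ gives $\delta\le\delta^*$ and $\Delta^*\le\Delta$.
\item At the initial point and at every re-entry through $|x|=\delta$, one has $V<\overline{\alpha}(\Delta^*)\le\underline{\alpha}(\Delta)$. This is at most the value of $V$ on $|x|=\Delta$, so the outer sphere is never reached.
\item After re-entry, $|x(t)|\le\underline{\alpha}^{-1}\circ\overline{\alpha}(\delta)\le\delta^*$.
\item The uniform reaching time $\overline{\alpha}(\Delta)/\alpha(\delta)$ then gives the $\mathcal{KL}$ bound.
\end{itemize}
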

We are now ready to present the following result.

%From the dynamics of \eqref{eq2}, we denote $\hat Z(t)=\hat Z(t,\tilde x)$. 
\begin{proposition}\label{prop4}
Let Assumption \ref{assumption2} hold. Let $\hat z : \mathbb{R}_{\geq 0} \times \mathbb{R}^{\bar M} \times \mathbb{R}^{n_\theta} \to \mathbb{R}^{\bar M},
	(t, \tilde{x}, \theta) \mapsto \hat z(t,\tilde{x},\theta),$ 
    and $\phi : \mathbb{R}_{\geq 0} \times \mathbb{R}^{\bar M} \times \mathbb{R}^{n_\theta} \to \mathbb{R}^{\bar M}, (t, \tilde{x}, \theta) \mapsto \phi(t,\tilde{x},\theta),$
	be parameterized by constants $\theta \in \Theta \subset \mathbb{R}^{n_\theta}$.
Let $\rho : \mathbb{R}_{\geq 0} \to \mathbb{R}_{\geq 0}$ be a continuous non-decreasing function, and assume that  $\hat z_\theta(t,\tilde{x}) := \hat z(t,\tilde{x},\theta)$ satisfies 
	\begin{equation}\label{19}
		\max \left\{ |\hat z_\theta(\tilde{x})|, \left| \frac{\partial \hat z_\theta(\tilde{x})}{\partial t} \right|, \left| \frac{\partial \hat z_\theta(\tilde{x})}{\partial \tilde{x}} \right| \right\} \leq \rho(|\tilde{x}|).
	\end{equation}
    Let the control input $u$ be as in \eqref{u}, where $c_1>N$ and $\hat x(t)$ is given by the update law 
    \begin{align}\label{eq2_theta}
	\dot{\hat{x}}(t) = -\hat{x}(t) + \phi(t,\tilde{x},\theta).
    \end{align}
     %If $\phi_{\theta}(t, \tilde x)$ is chosen such that
     If $\bar E\hat Z_\theta(t, \tilde x)$ is u$\delta$-PE with respect to $\tilde{x}$, where $\hat Z_\theta := \mbox{diag}(\hat z_{\theta})$ and $\hat z_{\theta}=\bar E^\top \hat x$, then the origin of the closed-loop system \eqref{eq1} is \textit{uniformly semiglobally practically asymptotically stable} on $\Theta$.  
\end{proposition}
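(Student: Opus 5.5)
The plan is to verify the hypotheses of Theorem~\ref{th1} for the closed-loop system \eqref{eq1}, with state $\xi^\top=[\tilde x^\top\ \tilde w_s^\top]$ and the parameter $\theta$ appearing through $\hat Z_\theta(t,\tilde x)$. The key idea is to combine the two Lyapunov functions already used. The first is $V_1$ in \eqref{V1}, whose derivative is only negative semidefinite, $\dot V_1\le -(c_1-N)|\tilde x|^2$ by \eqref{Vdot} and the bound $\lambda_{\min}(L)\ge -N$ from Proposition~\ref{prop2}. The second is a strictification term that captures the excitation in the $\tilde w_s$ direction. Because $\hat Z_\theta$ depends only on $(t,\tilde x)$, the system \eqref{eq1} has the cascaded-skew structure of Lemma~\ref{kyplemma}, with $A=-c_1I-L$, $B=C=\bar E$ (KYP holds with $P=I$, $Q=2(c_1I+L)>0$) and $\psi=\hat Z_\theta$. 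The difference from Proposition~\ref{prop1} is that the u$\delta$-PE property is only with respect to $\tilde x$, not $\xi$. So when $\tilde x\approx 0$ but $\tilde w_s\neq 0$ we cannot invoke Lemma~\ref{kyplemma} directly, and we settle for practical stability.

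The steps are as follows. First, fix $\Delta>\delta>0$. From $\dot V_1\le 0$, every solution starting in $\mathbb{B}(\Delta)$ stays in $\mathbb{B}(\Delta)$, so \eqref{19} gives uniform bounds $\rho(\Delta)$ on $\hat Z_\theta$ and its derivatives along solutions. Second, use the u$\delta$-PE hypothesis on $\bar E\hat Z_\theta$ with $r=\Delta$ and a level $\delta'$ to be chosen. Whenever $\min_{s\in[t,t+T]}|\tilde x(s)|\ge\delta'$, we get $\int_t^{t+T}\hat Z_\theta\bar E^\top\bar E\hat Z_\theta\,d\tau\ge\mu I$. Following \cite{loria1999new}, build the standard integral-type cross term
\[
V_2(t,\xi)=-\tilde w_s^\top\Big(\int_t^\infty e^{t-\tau}\hat Z_\theta(\tau)\bar E^\top\bar E\hat Z_\theta(\tau)\,d\tau\Big)\tilde w_s,
\]
or equivalently a filtered term $\tilde x^\top\bar E\,\Omega(t)\tilde w_s$. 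Here $\Omega$ is generated by a stable filter driven by $\hat Z_\theta$; this is where the filtration Lemma~\ref{filprop} enters. Its derivative contains $-\mu'|\tilde w_s|^2$ plus cross terms bounded by $k(\Delta)|\tilde x||\tilde w_s|$. Third, set $V_{\delta,\Delta}=V_1+\varepsilon V_2$ with $\varepsilon(\delta,\Delta)$ small, so that $V_{\delta,\Delta}$ is sandwiched between quadratics and, on $\mathbb{H}(\delta,\Delta)$, satisfies $\dot V_{\delta,\Delta}\le -\alpha_{\delta,\Delta}(|\xi|)$. The $|\tilde x|^2$ term dominates the cross terms by Young's inequality, and the $-\varepsilon\mu'|\tilde w_s|^2$ term supplies negativity in the $\tilde w_s$ direction. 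Fourth, select $\theta^*(\delta,\Delta)$ so that the excitation level $\mu$ and the threshold $\delta'$ are compatible with $\delta$. Then check the ratio conditions $\underline\alpha^{-1}\circ\overline\alpha(\delta)\le\delta^*$ and $\overline\alpha^{-1}\circ\underline\alpha(\Delta)\ge\Delta^*$. This is straightforward because both bounds are near-quadratic with constants close to $1/2$ when $\varepsilon$ is small: pick $\delta$ a fixed fraction of $\delta^*$ and $\Delta$ a fixed multiple of $\Delta^*$. Theorem~\ref{th1} then yields the claim.

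The main obstacle is the second and third steps on the part of $\mathbb{H}(\delta,\Delta)$ where $|\tilde x|<\delta'$ but $|\tilde w_s|$ is sizable. There the u$\delta$-PE hypothesis gives no excitation, yet Theorem~\ref{th1} demands strict decrease. My first attempt is a trajectory argument. If $|\tilde x|$ stays below $\delta'$ on a window of length $T$, then $\dot{\tilde x}\approx-\bar E\hat Z_\theta\tilde w_s$ must be small there. Integrating and using Lipschitz bounds from \eqref{19} would force $|\tilde x|$ to exceed $\delta'$ within time $T$ unless $\bar E\hat Z_\theta\tilde w_s$ is small. This in turn is ruled out by choosing $\theta^*$ so that $\hat Z_\theta$ retains a nonvanishing excitation component, as in \cite{restrepo2023simultaneous}, effectively shrinking $\delta'$ relative to $\delta$. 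If this fails, I would fall back on a Lyapunov function that decreases only on average over windows of length $T$, and convert it to a strict one by the usual integral averaging.
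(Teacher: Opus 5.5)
Your overall architecture matches the paper's: strictify $V_1$ with an excitation-based term scaled by $\varepsilon$, bound everything on $\mathbb{H}(\delta,\Delta)$, use Young's inequality, and check the sandwich conditions of Theorem~\ref{th1}. However, the proposal does not produce a function with the required strict decrease, for two reasons.

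\textbf{The strictifying term is incomplete.} Write $\Phi:=\bar E\hat Z_\theta$ and $P(t):=\int_t^\infty e^{t-\tau}\Phi^\top\Phi\,d\tau$. Then $\dot P=P-\Phi^\top\Phi$, plus terms coming from the $\tilde x$-dependence. So the derivative of $-\tilde w_s^\top P\tilde w_s$ contains $+|\Phi\tilde w_s|^2$. That term is not a cross term, and it is not dominated pointwise by $-\tilde w_s^\top P\tilde w_s$.

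The paper therefore uses \emph{both} $V_2=\tilde x^\top\bar E\hat Z_\theta\tilde w_s$ and the integral term $V_3$. Since $\dot{\tilde x}$ contains $-\bar E\hat Z_\theta\tilde w_s$, $\dot V_2$ produces $-|\bar E\hat Z_\theta\tilde w_s|^2$, which cancels the positive term in $\dot V_3$. With your matrix-weighted integrand the cancellation is exact. The two terms are not interchangeable (``or equivalently''): the cross term alone, filtered or not, gives at best the semidefinite $-|\Phi\tilde w_s|^2$.

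\textbf{The extra $|\tilde w_s|^2$ terms fix the role of $\theta^*$.} If, as in the paper, $\tilde x$ is frozen inside the integral (which is what lets you invoke Lemma~\ref{lemma2}), differentiating it produces terms of order $|\tilde w_s|^3\le\Delta|\tilde w_s|^2$. These are the paper's $2\varepsilon\brho^3\Delta|\tilde w_s|^2$. They are not cross terms, and they scale with $\varepsilon$ exactly like the good term $-\varepsilon b'|\tilde w_s|^2$, so shrinking $\varepsilon$ cannot absorb them. This is the real job of $\theta^*(\delta,\Delta)$ in the paper: it is chosen so that $b'=e^{-T}\mu$ exceeds a threshold depending on $\brho$, $\mathrm{a}_1$ and $\Delta$. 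Your step four links $\theta^*$ only vaguely to $\mu$ and $\delta'$ and never imposes such a lower bound.

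\textbf{The decisive step is left open.} You flag it yourself: the decrease of $\tilde w_s$ on the part of $\mathbb{H}(\delta,\Delta)$ where $\tilde x$ is small. You offer a tentative trajectory argument and a fallback, so as written the proof is incomplete.

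The trajectory argument also does not close as sketched. Small $\tilde x$ on a window only makes $\big(\int_t^{t+s}\bar E\hat Z_\theta\,d\tau\big)\tilde w_s$ small. That contradicts $|\tilde w_s|\ge\delta$ only if $\hat Z_\theta$ stays exciting while $\tilde x$ is small, which is precisely what u$\delta$-PE with respect to $\tilde x$ does not give.

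The paper disposes of this region by asserting $V_3\le-e^{-T}\mu|\tilde w_s|^2$ on all of $\mathbb{H}(\delta,\Delta)$ for the chosen $\theta^*$. That is, it takes a uniform excitation estimate on the annulus as the input to the Lyapunov argument. Your concern is legitimate. At frozen $\tilde x=0$, Lemma~\ref{lemma2} gives nothing. If $\hat z_\theta(t,0)\equiv 0$, the points $(0,\tilde w_s)$ are equilibria of \eqref{eq1}. So some such uniform estimate, tied to $\theta^*$, is indispensable, and a proof must state it and use it rather than leave it as a contingency.
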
 
\begin{proof}
Consider the following candidate Lyapunov function $V$.
\begin{equation}\label{303}
    \begin{aligned} 
        V(t,\xi)&:=V_1(t, \xi)+\varepsilon V_4(t,\xi)\\
        V_4(t,\xi)&:=V_2(t,\xi)+V_3(t,\xi)\\
	V_2(t,\xi)&:=\tilde{x}^\top \bar E\hat Z_{\theta}(t,\tilde{x}) \tilde{w}_s\\
	 V_3(t,\xi)&:=- \tilde{w}_s^\top \tilde{w}_s\int_{t}^{\infty}e^{(t-\tau)} |\bar E \hat Z_{\theta}(\tau,\tilde{x})|^2 d\tau ,
	     \end{aligned}\end{equation}
where $\xi^\top = [\tilde x^\top \ \tilde{w}_s ^\top]$, $V_1(t,\xi)$ is given in \eqref{V1}, and $\varepsilon>0$. 

We now work toward applying Theorem~\ref{th1}. As a first step, we derive upper and lower bounds for $V$, by providing corresponding bounds for the functions $V_2$, $V_3$, and $V_4$. Since $\bar E\hat Z_{\theta}$ is u$\delta$-PE, for all $(t,\xi)\in\mathbb{R}_{\geq 0}\times\mathbb{H}(\delta, \Delta)$, we get
 \begin{align}
     V_3(t,\xi)%=& - \tilde{w}_s^\top \int_{t}^{\infty}e^{(t-\tau)} \hat Z_{\theta}(\tau,\tilde{x}){\bar E}^\top \bar E\hat Z_{\theta}(\tau,\tilde{x}) d\tau  \tilde{w}_s \notag\\
     \leq& - \tilde{w}_s^\top \tilde{w}_s \int_{t}^{t+T}e^{(t-\tau)}| {\bar E}\hat Z_{\theta}(\tau,\tilde{x}) |^2 d\tau   \notag\\
   %\leq  &-e^{-T} \mu |\tilde{w}_s |^2, 
    \leq  &-e^{-T} \mu |\tilde{w}_s |^2,
\label{v3b}
 \end{align}
where $\mu$ and $T$ are as given in Definition~\ref{delta-PE} for the u$\delta$-PE function $\bar E\hat Z_{\theta}$. Note that the last inequality is obtained using $e^{(t-\tau)}\geq e^{-T}$ for $\tau \in [t,t+T]$, and Lemma~\ref{lemma2}. Denote: $b':=e^{-T} \mu>0, \;\; b:=|\bar E|>0, \;\; \brho:=b\rho(\Delta)\ge 0.$ %b\rho(|\tilde x|).
% b' &:=e^{-T} \mu >0, \;\; b^*:=e^{-T} \gamma_{\delta}(|\tilde x|,\theta) >0, \nonumber\\
% b &:=|\bar E|>0, \;\; \brho:=b\rho(\Delta)\ge 0.}
Based on \eqref{19} and \eqref{v3b}, $ V_4(t,\xi)$ satisfies, for all $(t,\xi)\in\mathbb{R}_{\geq 0}\times\mathbb{H}(\delta, \Delta)$, 
	\begin{equation*}%\label{326}
	V_4(t,\xi)\leq b |\tilde{x} |\rho(|\tilde{x}|)|\tilde{w}_s| - b'|\tilde{w}_s|^2
    \leq \brho |\tilde{x}| |\tilde{w}_s| - b'|\tilde{w}_s|^2.
	\end{equation*} 
 %Define $\brho:=b\rho(\Delta)$.
In addition, for the lower bound of $V_3$, we have $| {\bar E}\hat Z_{\theta}(\tau,\tilde{x})|^2 \leq b^2 \rho^2(|\tilde x|)\leq b^2 \rho^2(|\Delta|)$. Therefore, $\varepsilon V_4(t,\xi)$ satisfies on $\mathbb{R}_{\geq 0}\times\mathbb{H}(\delta, \Delta)$  
	\begin{equation}\label{333}\begin{aligned}
	-\varepsilon \brho |\tilde{x}| |\tilde{w}_s|-\varepsilon \brho^2 |\tilde{w}_s|^2 &\leq \varepsilon V_4(t,\xi)\\&\leq \varepsilon \brho |\tilde{x} | |\tilde{w}_s| - \varepsilon  b' |\tilde{w}_s|^2.
	\end{aligned}\end{equation}
From \eqref{V1} and \eqref{333}, it follows that for  any $\Delta>0$ and for a sufficiently small $\varepsilon$, there exist constant $\underline{\alpha}_{\delta,\Delta}>0$ and $\overline{\alpha}_{\delta,\Delta}>0$ such that for all $(t,x)\in\mathbb{R}_{\geq 0}\times\mathbb{H}(\delta, \Delta)$
	\begin{equation}\label{341}
	\underline{\alpha}_{\delta,\Delta} |\xi |^2\leq V(t,\xi) \leq \overline{\alpha}_{\delta,\Delta}|\xi |^2.
	\end{equation}
Next, we obtain the derivative of $V(t,\xi)$ along the trajectories of the system \eqref{eq1}. Since $\dot V_1(t,\xi)$ is given in \eqref{dotV1}, we focus the analysis on the derivative of $V_4(t,\xi)$, thus on $\dot V_2$ and $\dot V_3$. 
Let $\dot{\hat z}_{\theta}(t,\tilde{x}):=\frac{\partial \hat z_\theta(t,\tilde{x})}{\partial t}+\frac{\partial \hat z_\theta(t,{\tilde{x}})}{\partial \tilde{x}}^\top {\dot {\tilde{x}}} $ and $\dot {\hat Z}_{\theta}=\mathrm{diag}(\dot{\hat z}_{\theta})$. We obtain the following expressions for $\dot V_2$ and $\dot V_3$:
	 \begin{equation*}\label{dotv2}\begin{aligned} 
	\dot V_2(t,\xi)\le &|\hat Z_{\theta}(t,\tilde{x}){\bar E}^\top \tilde{x}|^2 + \tilde{w}_s^\top \hat Z_{\theta}(t,\tilde{x}) {\bar E}^\top (-c_1 I - L)\tilde{x}\\
	&\; - | {\bar E} \hat Z_{\theta}(t,\tilde{x}) \tilde{w}_s|^2 + \tilde{w}_s^\top \dot{\hat Z}_{\theta} (t,\tilde{x}){\bar E}^\top \tilde{x},\\  
     \dot V_3(t,\xi) = & -2\tilde{w}_s^\top \dot {\tilde{w}}_s I(t) - |\tilde w_s|^2 I(t) + |\tilde w_s|^2 | {\bar E} \hat Z_{\theta}(t,\tilde{x})|^2\\
      & - |\tilde w_s|^2 \int_{t}^{\infty}e^{(t-\tau)} \frac{\partial}{\partial t}|\bar E \hat Z_{\theta}(\tau,\tilde{x})|^2 d\tau,
    % \color{blue}\dot V_3(t,\xi) &\color{blue}\le -2\tilde{w}_s^\top \dot {\tilde{w}}_s \int_{t}^{\infty}{e^{(t-\tau)}| {\bar E}\hat Z_{\theta}(\tau,\tilde{x}) }|^2 d\tau\\ 
    % &\color{blue} \!\!-\!\! | \tilde{w}_s|^2 \int_{t}^{\infty} e^{(t-\tau)}|{\bar E}\hat Z_{\theta}(\tau,\tilde{x}) |^2\! d\tau  +\!\left |{\bar E}\hat Z_{\theta}(t,\tilde{x})\!\tilde{w}_s\right|^2.
	\end{aligned}\end{equation*}
    where $I(t) := \int_{t}^{\infty}e^{(t-\tau)} |\bar E \hat Z_{\theta}(\tau,\tilde{x})|^2 d\tau.$
% In addition to the parameters defined in \eqref{proof3:param_b}, we introduce the notations
% \begin{align*}
%         \a_1&:= c_1 + \lambda_{\max} (L),\quad \a_2:= c_1 + \lambda_{\min} (L)\\
%         q(y,s)&:=\big((1+\a_1+\a_1 \Delta)\brho+\Delta\brho^2+2(\brho+\a_1 \Delta)\brho^2\big) y s\\
%         & \ \ \  +\brho^2 y^2+2 \Delta \brho^3 s^2,
%     \end{align*}
    %  ---- original
    % \begin{align*}
    %     \a&:= c_1 + \lambda_{\min} (L)\\
    %     q(y,s)&:=\brho[(1+2\brho^2 +\a+2\brho^2 \a \Delta+\a \Delta +\brho\Delta) y s\\
    %     & \ \ \  +\brho y^2+2\brho^2 \Delta s^2].
    % \end{align*}
Using \eqref{19}, we obtain $\dot {\hat Z}_{\theta}\le \rho(\Delta) +\a_1\rho(\Delta)|\tilde x|+\brho\rho(\Delta) |\tilde{\omega}_s|$.  
    Moreover, for $(t,\xi)\in\mathbb{R}_{\geq 0}\times\mathbb{H}(\delta, \Delta)$, the latter allows us to derive upper bounds for $\dot V_2$ and $\dot V_3$,  
     \begin{align*}
     % V2--
          \dot V_2 \leq &\brho^2 |\tilde x|^2 + \a_1\brho|\tilde x||\tilde w_s| - | {\bar E} \hat Z_{\theta}(t,\tilde{x})  \tilde{w}_s|^
          2\\ & + (\brho+ \a_1\brho\Delta +\brho^2 \Delta)|\tilde x||\tilde w_s|
     \\
     % V3---
      \dot V_3 \leq &2\brho^3 |\tilde x||\tilde{w}_s | + V_3 + |\tilde w_s|^2 | {\bar E} \hat Z_{\theta}(t,\tilde{x})|^2 \\
      &+ 2 \a_1 \brho^2 \Delta |\tilde x||\tilde{w}_s | +2 \brho^3 \Delta |\tilde{w}_s |^2, 
     \end{align*}
     where $\a_1:= c_1 + \lambda_{\max} (L)$ and $\lambda_{\max}(L)$ is the maximum eigenvalue of $L$.
 %     which, in turn, yield
	%  \begin{equation}\label{318}
	% \dot V_4(t,x)\leq q( |\tilde{x} |, |\tilde{w}_s |) - b' |\tilde{w}_s |^2.
	% \end{equation} 
    %The last term in $\dot{V}_3$ is derived from the fact that the second term of $\frac{\partial V_3}{\partial t}$ is $V_3$. 
	Therefore, from \eqref{dotV1}, the derivative of $V(t,\xi)$ satisfies, for all $(t,\xi)\in\mathbb{R}_{\geq 0}\times\mathbb{H}(\delta, \Delta)$, 
	 \begin{equation*}\begin{aligned}[b]
	\dot V(t,\xi) \leq& -\a_2 |\tilde{x} |^2 + \varepsilon (\a_1 \brho+\brho+\a_1\brho \Delta + 2\a_1\brho^2\Delta + \brho^2\Delta \\&
    +2\brho^3) |\tilde{x} | |\tilde{w}_s|  + 2\varepsilon \brho^3 \Delta |\tilde{w}_s |^2 + \varepsilon \brho^2 |\tilde{x} |^2 - \varepsilon b' |\tilde{w}_s|^2,
	\end{aligned}\end{equation*} 
    where $\a_2:= c_1 + \lambda_{\min} (L)$ and $\lambda_{\min}(L)$ is the minimum eigenvalue of $L$.
    Using Young's inequality, we have 
	\begin{equation*}\begin{aligned}[b] 
	\dot V(t,\xi) 	\leq& -(\a_2-\varepsilon(1+5\brho^2+4\brho^2\a_1^2
\Delta^2+\a_1^2 +  \brho^4 ) ) |\tilde{x} |^2\\&+\varepsilon ( \brho^2/2+\brho^4/2+\a_1^2\Delta^2/4+\Delta^2/4-b') |\tilde{w}_s |^2.  
	\end{aligned}\end{equation*}  
%     Choosing $\theta^*(\delta, \Delta)$ and thus $\mu$ and $T$ according to Definition~\ref{delta-PE} such that  $b' \geq \brho^2/2+\brho^4/2+\a^2\Delta^2/4+\Delta^2/4 +\beta'$ and $\beta'>0$ yields
% 	\begin{equation*}\begin{aligned}[b]
% 	\dot V(t,x) 	\leq& -(\a-(1+5\brho^2+4\brho^2\a^2
% \Delta^2+\a^2 +  \brho^4 )\varepsilon   ) |\tilde{x} |^2\\&- \varepsilon \beta' |\tilde{w}_s |^2. 
% 	\end{aligned}\end{equation*}
%Here we can choose  $\mu$, $T$  or $\theta^*(0, \Delta)$ such that $b' \geq \brho^2/2+\brho^4/2+ 2\brho^3 \Delta+\a^2\Delta^2/4+\Delta^2/4 +\beta'$.    
 Choosing $\theta^*(\delta, \Delta)$ and $\varepsilon$ sufficiently small, or
the parameter $\a_2$ sufficiently large, such that $\a_2-\varepsilon\left(1+5\brho^2+4\brho^2\a_1^2\Delta^2+\a_1^2+\brho^4\right)  >\alpha$ and $b' \geq \brho^2/2+\brho^4/2+\a_1^2\Delta^2/4+\Delta^2/4 +\beta'$ with $\beta'>0$,
 yields
  \begin{equation}\begin{aligned}[b] \label{353}
	\dot V(t,\xi)   \leq& -\alpha |\tilde{x} |^2- \varepsilon \beta' |\tilde{w}_s |^2.
	\end{aligned}\end{equation}
Therefore,  from \eqref{341} and \eqref{353}, by Theorem \ref{th1}, the origin of \eqref{eq1} is uniformly semi-globally practically asymptotically stable on $\Theta$.
	\end{proof}

\section{Numerical Example}
We consider a network of $N=12$ agents modeled by \eqref{CN} interconnected over an unknown and signed network, represented in Figure \ref{fig_top}. The objective is to identify the unknown signed weights $\bar w_{k_s}$. The external input $u$ is given by \eqref{u} with \eqref{updatelaw_two} and \eqref{eq2_theta}, where $\phi_{\theta}(t,\tilde x) = \tanh(\kappa \tilde x)p_e(t)$, $p_e(t) = 0.5 \sin(15 \pi t) + 0.3 \cos(6 \pi t) - 0.5 \sin(8 \pi t) +0.7 \cos(12 \pi t) + 2 \sin(\pi t) - 0.3 \cos(2 \pi t) - 0.8 \sin(18 \pi t)$, $c_1=13 >N$ and $\kappa=1000$. The simulation results are presented in Figures~\ref{fig1}--\ref{fig3}. As shown in Proposition \ref{prop4}, the weight estimation errors $\tilde w_{s}$ converge to a small neighborhood of the origin (Figure~\ref{fig1}), indicating that the signed topology is successfully identified (see Figure~\ref{fig2}). Although the estimation errors do not converge exactly to zero, the neighborhood can be made arbitrarily small by appropriately tuning the parameter $\kappa$.  Finally, the proposed topology identification and synchronization strategy operates, with a small tunable error, without requiring global knowledge of the network (Figure~\ref{fig3}).

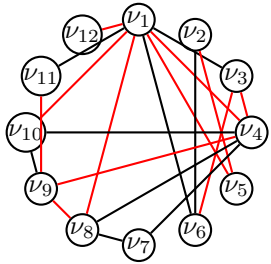
\begin{figure}[!htb]
\centering
\begin{tikzpicture}[thick,scale=0.6,
  mynode/.style={circle,draw,minimum size=12pt,inner sep=0pt,thick},
  coop/.style={- , red, thick},
  comp/.style={- , black, thick}
  ]

% ==== Nodes ====
\foreach \i [count=\n from 1] in {90,60,30,0,-30,-60,-90,-120,-150,180,150,120}{
    \node[mynode] (\n) at ({2.5*cos(\i)},{2.5*sin(\i)}) {$\nu_{\n}$};
}

% ==== Edges derived from L_s ====
\draw[comp] (1)--(3);
\draw[coop] (1)--(4);
\draw[coop] (1)--(5);
\draw[comp] (1)--(6);
\draw[coop] (1)--(8);
\draw[coop] (1)--(10);
\draw[comp] (1)--(11);
\draw[coop] (1)--(12);

\draw[coop] (2)--(5);
\draw[comp] (2)--(6);

\draw[coop] (3)--(4);
\draw[coop] (3)--(6);

\draw[comp] (4)--(7);
\draw[comp] (4)--(8);
\draw[coop] (4)--(9);
\draw[comp] (4)--(10);

\draw[comp] (7)--(8);

\draw[coop] (8)--(9);

\draw[comp] (9)--(10);
\draw[coop] (9)--(11);
\end{tikzpicture}
\caption{A connected signed graph of the 12 agents. The black lines represent cooperative edges, and the red lines represent the antagonistic edges.}
\label{fig_top}
\end{figure}

\begin{figure}[htbp]
\centering

\subfloat[]{
    \includegraphics[width=0.9\columnwidth]{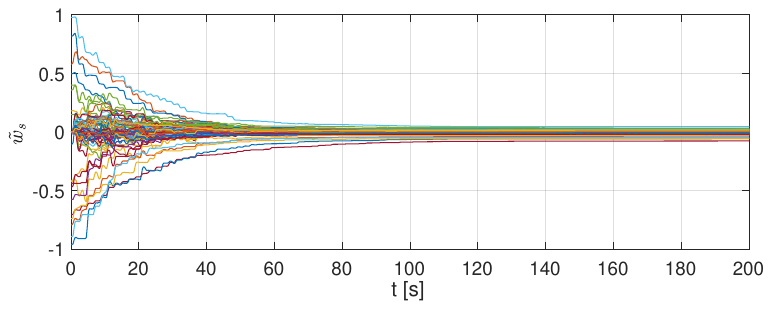}
    \label{fig1}
}
\hfill
\subfloat[]{
    \includegraphics[width=0.9\columnwidth]{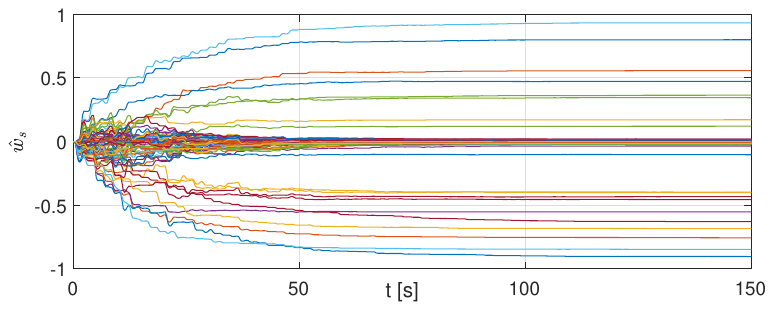}
    \label{fig2}}
    \hfill
\subfloat[]{
    \includegraphics[width=0.9\columnwidth]{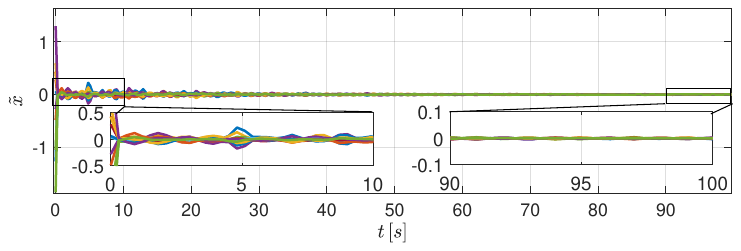}
    \label{fig3}
}

\caption{Evolution of the trajectories. (a): Estimation errors. (b): Estimated weights. (c): Synchronization errors.}
\label{fig:main}
\end{figure}

% \begin{figure}[h!]
% 	\centering
% 	\includegraphics[width=\columnwidth]{Figures/w_tilde.pdf}
% 	\caption{{\small Evolution of the estimation errors.}}
%     \label{fig1}
% \end{figure}

% \begin{figure}[h!]
% 	\centering
% 	\includegraphics[width=\columnwidth]{Figures/w_hat.pdf}
% 	\caption{ {\small Evolution of the estimated weights.}}
%     \label{fig2}
% \end{figure}

% \begin{figure}[h!]
% 	\centering
% 	\includegraphics[width=\columnwidth]{Figures/x_tilde2.pdf}
% 	\caption{ {\small Evolution of the synchronization errors.}}
%     \label{fig4}
% \end{figure}

\section{CONCLUSIONS}
We presented an adaptive control protocol for simultaneous synchronization and identification of the topology of signed dynamical networks with unknown and normalized weights. The design was based on the edge-based formulation, signed Laplacian, and on a known auxiliary network which was designed to be $\delta$-PE. We established uniform semiglobal practical asymptotical stability of the estimation errors. Future work will focus on identifying the topology of signed networks with edge weights of magnitude greater than one, considering discrete-time dynamics, and allowing agents to join or leave the network.

\bibliographystyle{IEEEtran}
\bibliography{references_pelin,reference_nana} %,references_angela

@ARTICLE{Shi2019Dynamics,
author = {Shi, Guodong and Altafini, Claudio and Baras, John S.},
doi = {10.1137/17M1134172},
issn = {0036-1445},
journal = {SIAM Review},
number = {2},
pages = {229--257},
title = {{Dynamics over Signed Networks}},
volume = {61},
year = {2019}
}

@article{Fontan2021Signed,
author = {Fontan, Angela and Altafini, Claudio},
doi = {10.1038/s41598-021-84147-3},
issn = {2045-2322},
journal = {Scientific Reports},
number = {5134},
pmid = {33664333},
publisher = {Nature Publishing Group UK},
title = {{A signed network perspective on the government formation process in parliamentary democracies}},
volume = {11},
year = {2021}
}

@preamble{"\def\nonumero{\def\numerodeitem{}}"}

@string{SIAM = "SIAM J. on Contr. and Opt."}

@preamble{ "\def\nonumero{\def\numerodeitem{}}" }

@ARTICLE{altafini2012_6329411,
	title        = {Consensus Problems on Networks With Antagonistic Interactions},
	author       = {C. Altafini},
	year         = 2013,
	journal      = {IEEE Trans. on Autom. Cont.},
	volume       = 58,
	number       = 4,
	pages        = {935--946},
	doi          = {10.1109/TAC.2012.2224251}
}

@ARTICLE{9965602,
  author={Fontan, Angela and Wang, Lingfei and Hong, Yiguang and Shi, Guodong and Altafini, Claudio},
  journal={IEEE Trans. on Autom. Cont.}, 
  title={Multiagent Consensus Over Time-Invariant and Time-Varying Signed Digraphs via Eventual Positivity}, 
  year={2023},
  volume={68},
  number={9},
  pages={5429-5444}}

@article{chaillet2008uniform,
  title={Uniform semiglobal practical asymptotic stability for non-autonomous cascaded systems and applications},
  author={Chaillet, Antoine and Lor{\'\i}a, Antonio},
  journal={Automatica},
  volume={44},
  number={2},
  pages={337--347},
  year={2008},
  publisher={Elsevier}
}

@article{fontan2023pseudoinverses,
  title={Pseudoinverses of signed Laplacian matrices},
  author={Fontan, Angela and Altafini, Claudio},
  journal={SIAM Journal on Matrix Analysis and Applications},
  volume={44},
  number={2},
  pages={622--647},
  year={2023},
  publisher={SIAM}
}

@inproceedings{restrepo2023simultaneous,
  title={Simultaneous synchronization and topology identification of complex dynamical networks},
  author={Restrepo, Esteban and Wang, Nana and Dimarogonas, Dimos V},
  booktitle={IEEE Conf. on Dec. and Control},
  pages={393--398},
  year={2023}
}

@article{restrepo2023simultaneousTCNS,
  title={Simultaneous topology identification and synchronization of directed dynamical networks},
  author={Restrepo, Esteban and Wang, Nana and Dimarogonas, Dimos V},
  journal={IEEE Trans. on Cont. of Network Sys.},
  volume={11},
  number={3},
  pages={1491--1501},
  year={2023}
}

@inproceedings{loria1999new,
  title={A new notion of persistency-of-excitation for UGAS of NLTV systems: Application to stabilisation of nonholonomic systems},
  author={Lor{\'\i}a, Antonio and Panteley, Elena and Teel, Andrew},
  booktitle={Eur. Control Conf.},
  pages={1363--1368},
  year={1999}
}

@article{loria2005nested,
  title={A nested Matrosov theorem and persistency of excitation for uniform convergence in stable nonautonomous systems},
  author={Lor{\'\i}a, Antonio and Panteley, Elena and Popovic, Dobrivoje and Teel, Andrew R},
  journal={IEEE Trans. on Autom. Cont.},
  volume={50},
  number={2},
  pages={183--198},
  year={2005}
}

@article{panteley2001relaxed,
  title={Relaxed persistency of excitation for uniform asymptotic stability},
  author={Panteley, Elena and Loria, Antonio and Teel, Andrew},
  journal={IEEE Trans. on Autom. Cont.},
  volume={46},
  number={12},
  pages={1874--1886},
  year={2001}
}

@ARTICLE{11045067,
  author={Gogianu, Florin and Buşoniu, Lucian and Morărescu, Irinel-Constantin},
  journal={IEEE Control Systems Letters}, 
  title={Learning-Based Control of the Consensus Value in Unknown Graphs}, 
  year={2025},
  volume={9},
  pages={1093-1098}}

@book{Horn,
  title={Matrix analysis},
  author={Horn, Roger A and Johnson, Charles R},
  year={2012},
  publisher={Cambridge university press}
}

@article{newman2003structure,
  title={The structure and function of complex networks},
  author={Newman, Mark EJ},
  journal={SIAM review},
  volume={45},
  number={2},
  pages={167--256},
  year={2003},
  publisher={SIAM}
}

@article{li2023topology,
  title={Topology inference for network systems: Causality perspective and nonasymptotic performance},
  author={Li, Yushan and He, Jianping and Chen, Cailian and Guan, Xinping},
  journal={IEEE Trans. on Autom. Cont.},
  volume={69},
  number={6},
  pages={3483--3498},
  year={2023}
}

@article{zhu_new_2021,
	title = {A {New} {Method} for {Topology} {Identification} of {Complex} {Dynamical} {Networks}},
	volume = {51},
	doi = {10.1109/TCYB.2019.2894838},
	number = {4},
	journal = {IEEE Trans. on Cybernetics},
	author = {Zhu, Shuaibing and Zhou, Jin and Chen, Guanrong and Lu, Jun-An},
	year = {2021},
	pages = {2224--2231}
}

@article{van_waarde_topology_2019,
	title = {Topology {Reconstruction} of {Dynamical} {Networks} via {Constrained} {Lyapunov} {Equations}},
	volume = {64},
	number = {10},
	urldate = {2025-01-24},
	journal = {IEEE Trans. on Autom. Cont.},
	author = {Van Waarde, Henk J. and Tesi, Pietro and Camlibel, M. Kanat},
	month = oct,
	year = {2019},
	pages = {4300--4306}
}

@article{weerts2018prediction,
  title={Prediction error identification of linear dynamic networks with rank-reduced noise},
  author={Weerts, Harm HM and Van den Hof, Paul MJ and Dankers, Arne G},
  journal={Automatica},
  volume={98},
  pages={256--268},
  year={2018},
  publisher={Elsevier}
}

@inproceedings{wang2023finite,
  title={Finite-time topology identification for complex dynamical networks},
  author={Wang, Nana and Dimarogonas, Dimos V},
  booktitle={62nd IEEE Conf. on Dec. and Cont.},
  pages={425--430},
  year={2023},
}

@article{IEEE-robotmanip,
  title={Robust Formation Control of Robot Manipulators with Inter-agent Constraints over Undirected Signed Networks},
  author={{\c{S}}ekercio{\u{g}}lu, P. and Jayawardhana, B. and Sarras, I. and Lor{\'\i}a, A. and Marzat, J.},
  journal={IEEE Trans. on Cont. of Network Sys.},
    year={2025},
  volume={12},
  number={1},
  pages={251-261}
}
\end{document}